\def\bSig\mathbf{\Sigma}
\DeclareMathOperator{\E}{\textnormal{\mbox{E}}}
\newcommand*{\indep}{%
  \mathbin{%
    \mathpalette{\@indep}{}%
  }%
}
\newcommand*{\nindep}{%
  \mathbin{
    \mathpalette{\@indep}{\not}
  }%
}
\newcommand*{\@indep}[2]{%
  \sbox0{$#1\perp\m@th$}
  \sbox2{$#1=$}
  \sbox4{$#1\vcenter{}$}
  \rlap{\copy0}
  \dimen@=\dimexpr\ht2-\ht4-.2pt\relax
  \kern\dimen@
  {#2}%
  \kern\dimen@
  \copy0 
} 
\newcommand*{\addFileDependency}[1]{
  \typeout{(#1)}
  \@addtofilelist{#1}
  \IfFileExists{#1}{}{\typeout{No file #1.}}
}
\newcommand*{\myexternaldocument}[1]{%
    \externaldocument{#1}%
    \addFileDependency{#1.tex}%
    \addFileDependency{#1.aux}%
}
\title[Causally interpretable meta-analysis]{Efficient and robust methods for causally interpretable meta-analysis: transporting inferences from multiple randomized trials to a target population}
\author{Issa J. Dahabreh$^{1-3,*}$\email{idahabreh@hsph.harvard.edu}, 
Sarah E. Robertson$^{1,2}$,
Lucia C. Petito$^{4}$,
Miguel A. Hern\'an$^{1,2,5}$,  \\
\textbf{and Jon A. Steingrimsson$^{6}$} \\
$^{1}$CAUSALab, Harvard T.H. Chan School of Public Health, Boston, MA \\
$^{2}$Department of Epidemiology, Harvard T.H. Chan School of Public Health, Boston, MA \\
$^{3}$Department of Biostatistics, Harvard T.H. Chan School of Public Health, Boston, MA  \\
$^{4}$Department of Preventative Medicine, Feinberg School of Medicine, \\ Northwestern University, Chicago, IL \\
$^{5}$Harvard-MIT Division of Health Sciences and Technology, Boston, MA \\
$^{6}$Department of Biostatistics, School of Public Health, Brown University, Providence, RI}
\begin{document}


\date{{\it Received January} 2021. {\it Revised February} 2022.  {\it
Accepted XXX} XXX.}



\pagerange{\pageref{firstpage}--\pageref{lastpage}} 
\volume{XXX}
\pubyear{2022}
\artmonth{XXX}


\doi{10.1111/j.1541-0420.2005.00454.x}


\label{firstpage}


\begin{abstract}
We present methods for causally interpretable meta-analyses that combine information from multiple randomized trials to estimate potential (counterfactual) outcome means and average treatment effects in a target population. We consider identifiability conditions, derive implications of the conditions for the law of the observed data, and obtain identification results for transporting causal inferences from a collection of independent randomized trials to a new target population in which experimental data may not be available. We propose an estimator for the potential (counterfactual) outcome mean in the target population under each treatment studied in the trials. The estimator uses covariate, treatment, and outcome data from the collection of trials, but only covariate data from the target population sample. We show that it is doubly robust, in the sense that it is consistent and asymptotically normal when at least one of the models it relies on is correctly specified. We study the finite sample properties of the estimator in simulation studies and demonstrate its implementation using data from a multi-center randomized trial.
\end{abstract}

%

\begin{keywords}
meta-analysis; transportability; causal inference
\end{keywords}


\maketitle


%

\section{Introduction}
\label{s:intro}

When examining a body of evidence that consists of multiple trials, decision makers are typically interested in learning about the effects of interventions in some well-defined \emph{target population}. In other words, they are interested in \emph{synthesizing the evidence} across trials and \emph{transporting causal inferences} from the collection of trials to a target population in which further experimentation may not be possible. Typically, each trial samples participants from a different underlying population, by recruiting participants from centers with different referral patterns or in different geographic locations. The goal of evidence synthesis in this context is to use the information from these diverse trials to draw causal inferences about the target population, accounting for any differences between the target population and the populations underlying the trials.

``Meta-analysis,'' an umbrella term for statistical methods for synthesizing evidence across multiple trials \citep{cooper2009handbook}, traditionally focuses on modeling the distribution of treatment effects (effect sizes) across studies or on obtaining unbiased and minimum variance summaries of data from multiple trials \citep{higgins2009re, rice2018re}. Standard meta-analysis methods produce estimates that do not have a clear causal interpretation outside of the sample of participants enrolled in the trials because the estimates do not pertain to any well-defined target population \citep{dahabreh2020toward}. Recent work on ``generalizability'' and ``transportability'' has considered methods for extending causal inferences to a target population, when data are available from a single randomized trial \citep{westreich2017, rudolph2017, dahabreh2020transportingStatMed}. When multiple trials are available, methods have been proposed for assessing case-mix heterogeneity in individual patient data meta-analyses without specifying or using data from a target population of substantive interest  \citep{vo2019novel, vo2021assessing}. 

Here, we propose methods for \emph{causally interpretable meta-analysis} that can be used to extend inferences about potential (counterfactual) outcome means and average treatment effects from a collection of randomized trials to a target population in which experimental data may not be available \citep{dahabreh2020toward}. We consider identifiability conditions, derive implications of the conditions for the law of the observed data, and obtain identification results. We then propose novel estimators for potential outcome means and average treatment effects in the target population. The estimators use data on baseline covariates, treatments, and outcomes from the collection of trials, but only data on baseline covariates from the sample of the target population. We show that the estimators are doubly robust, in the sense that they remain consistent and asymptotically normal provided at least one of the two working models on which they rely is correctly specified. Last, we study the finite sample properties of the estimators in simulation studies and demonstrate the implementation of the methods using data from a multi-center trial of treatments for hepatitis C infection.

\section{Data, sampling scheme, and causal estimands}\label{sec:data_causal_quant}

Suppose we have data from a collection of randomized trials $\mathcal S$, indexed by $s = 1, \ldots, m$. For each trial participant we have information on the trial $S$ in which they participated, treatment assignment $A$, baseline covariates $X$, and outcomes $Y$. We assume that the same finite set of treatments, $\mathcal A$, has been compared in each trial of the collection (extensions to cases where only a subset of treatments are evaluated in each trial are straightforward but require more cumbersome notation). From each trial $s\in \mathcal S$, the data are independent and identically distributed random tuples $(X_i, S_i = s, A_i, Y_i)$, $i = 1, \ldots, n_s$, where $n_s$ is the total number of randomized individuals in trial $s$. 

We also obtain a simple random sample from the target population (of individuals not participating in the trials), in which treatment assignment may not be under the control of investigators (e.g., individuals may self-select into treatment). We use the convention that $S = 0$ for individuals from the target population. The data from the sample of the target population consists of independent and identically distributed tuples $(X_i, S_i = 0,A_i,Y_i)$, $i = 1, \ldots, n_0$, where $n_0$ is the total number of individuals sampled from the target population. The total number of observations from the trials and the sample of the target population is $n = \sum_{s=0}^{m} n_s$. We define a new random variable $R$, such that $R=1$, if $S \in \mathcal S$; and $R=0$, if $S = 0$.

Informally, we assume that the observations from the trials in the collection $\mathcal S$ and from the target population $S=0$ are Bernoulli-type (independent random) samples \citep{breslow2007weighted, saegusa2013weighted} from (near-infinite) underlying populations \citep{robins1988confidence}. Specifically, we assume that the sample from the target population is representative of a well-specified population of substantive interest. In contrast, we do not require that the trial samples are obtained through a formal sampling process; rather, that the investigators are willing to model the data from each trial \emph{as if} sampled from a hypothetical underlying population. This modeling choice is consistent with the fairly standard super-population approach to the analysis of individual randomized trials \citep{robins1988confidence}. Though alternative frameworks (e.g., randomization-based inference) can be appealing in some cases, we find the super-population approach attractive when the goal is to extend inferences from a collection of trials  to a new target population \citep{dahabreh2019commentaryonweiss}. Investigators conducting meta-analyses usually do not have control over the selection of participants of each trial or the relative sample size of different trials. Thus, we view the observations available for analysis as obtained by sampling the corresponding underlying populations with sampling fractions that are not under the control of the investigators and are unknown to them. In Section \ref{sec:sampling_model} we further formalize this sampling scheme; for now it suffices to say that all expectations and probabilities below are defined under the sampling scheme. Throughout, we use $f(\cdot)$ to generically denote densities. In Section \ref{sec:sampling_model} we discuss the implications of the sampling scheme for identifiability of these densities. 

To define causal estimands, let $Y^a$ denote the potential (counterfactual) outcome under intervention to set treatment to $a \in \mathcal A$ \citep{rubin1974, robins2000d}. We are interested in the potential outcome mean in the target population, $ \E [Y^a | R = 0]$ for each $a \in \mathcal A $, as well as the average causal effect, $\E [Y^a - Y^{a^\prime} | R = 0]$ for each pair of treatments $a \in \mathcal A$ and $a^\prime \in \mathcal A$. The treatments used in the target population need not be the same as the treatments used in the trials (e.g., some treatments may not be available outside experimental settings). Furthermore, as we show below, data on treatments and outcomes from the target population are not necessary for identification and estimation of the potential outcome means and average treatment effects of interest.

\section{Identification}\label{sec:identification}

\subsection{Identifiability conditions}

The following are sufficient conditions for identifying the potential outcome mean in the target population $\E[Y^a | R = 0]$.

\vspace{0.1in}
\noindent
\emph{A1. Consistency of potential outcomes:} if $A_i = a,$ then $Y^a_i = Y_i$, for every individual $i$ and every treatment $a \in \mathcal A$. 
\vspace{0.1in}

Implicit in condition \emph{A1} are assumptions that (i) there is no direct effect of participation in any trial ($R=1$) or participation in some specific trial ($S=s$) on the outcome \citep{dahabreh2019identification, dahabreh2019commentaryonweiss}; (ii) there are no multiple versions of treatment (or that treatment variation is irrelevant with respect to the outcome \citep{vanderWeele2009}); and (iii) there is no treatment-outcome interference \citep{rubin1986, rubin2010reflections}. These assumptions may be most plausible in the case of large pragmatic trials where fidelity to the assigned treatment can be high across settings. Note that consistency of potential outcomes is distinct from the notion of consistency used in so-called network meta-analyses; consistency of potential outcomes refers to the relationship between observed (factual) and potential (counterfactual) outcomes under each treatment, not the relationship between contrasts (``direct'' and ``indirect'', in meta-analytic parlance) of different treatments.

\vspace{0.1in}
\noindent
\emph{A2. Exchangeability over treatment $A$ in each trial:} for each trial $s \in \mathcal S$ and each $a \in \mathcal A$, $Y^a \indep A | (X, S = s)$. 
\vspace{0.1in}

Condition \emph{A2} is typically plausible because of randomization (marginal or conditional on $X$) in each of the trials . The condition is not equivalent to the condition $Y^a \indep A | (X, R = 1)$ because the trials may have different randomization ratios and trial participation may have direct effects on the outcome or share unmeasured common causes with the outcome (the last two possibilities, however, are precluded by conditions \emph{A1} and \emph{A4}, respectively). It is also worth noting that condition \emph{A2} would also hold if $\mathcal S$ was a collection of observational studies in which the covariates $X$ were sufficient to adjust for baseline confounding. Thus, our results can also apply to pooled analyses of observational studies, provided that background knowledge suggests ``adjustment'' for $X$ is sufficient to control confounding.

\vspace{0.1in}
\noindent
\emph{A3. Positivity of treatment assignment in each trial:} for each treatment $a \in \mathcal A$ and for each trial $s \in \mathcal S$, if $f(x, S = s) \neq 0$, then $\Pr[A = a | X = x, S = s]  > 0$. 
\vspace{0.1in}

Condition \emph{A3} is also plausible in marginally and conditionally randomized trials.

\vspace{0.1in}
\noindent
\emph{A4. Exchangeability over $S$:} for each $a \in \mathcal A$, $Y^a \indep S | X$. 
\vspace{0.1in}

In applied work, condition \emph{A4} will typically be a critical assumption connecting the randomized trials in the collection $\mathcal S$, between them and with the target population ($S = 0$). On its own, condition \emph{A4} is not testable; we will show, however, that it has testable implications when combined with conditions \emph{A1} and \emph{A2}). In practice, condition A4 will need to be examined in light of substantive knowledge and examining the impact of violations of the condition may require undertaking sensitivity analyses \citep{robins2000c}.  

\vspace{0.1in}
\noindent
\emph{A5. Positivity of trial  participation:} for each $s \in \mathcal S$, if $f(x, R = 0) \neq 0$, then $\Pr[S = s |X = x] > 0$.
\vspace{0.1in}

Informally, condition \emph{A5} states that covariate patterns in the target population can also occur in each trial. This condition only involves the observable data and thus is in principle testable. Formal examination of the condition, however, is challenging when $X$ is high dimensional \citep{petersen2012diagnosing}.

\subsection{Implications of the identifiability conditions}

We now explore some implications of the identifiability conditions. By Lemma 4.2 of \citet{dawid1979conditional}, condition \emph{A4} implies exchangeability over $S$ among the collection of trials, that is, 
\begin{equation}\label{eq:independencies1}
Y^a \indep S | X \implies  Y^a \indep S | (X, R = 1) , \mbox{ for every } a \in \mathcal A, 
\end{equation}
and also implies exchangeability of participants in the collection of trials and the target population,
\begin{equation}\label{eq:independencies2}
Y^a \indep S | X \implies Y^a \indep I(S = 0) | X \Longleftrightarrow Y^a \indep R | X.
\end{equation}
The first of these results will be useful to derive restrictions on the law of the observed data imposed by the identifiability conditions; the second result will be useful in obtaining identification results for causal quantities in the target population, using information from the collection of trials. 

By Lemma 4.3 of \citet{dawid1979conditional}, the result in \eqref{eq:independencies1} and condition \emph{A2} imply conditional exchangeability of potential outcomes over $(S, A)$ in the collection of trials, that is,
\begin{equation}\label{eq:independencies3}
    \begin{Bmatrix}
Y^a \indep A | (X, R=1, S) \\ 
Y^a \indep S | (X, R = 1) \\ 
\end{Bmatrix} \implies Y^a \indep (S, A) | (X, R = 1).
\end{equation}
Noting that $Y^a \indep (S, A) | (X, R = 1)$ implies $Y^a \indep S  | (X, R = 1, A = a)$, and using condition \emph{A1}, we obtain
\begin{equation}\label{eq:independencies4}
    Y \indep S | (X, R = 1, A = a).
\end{equation}
Thus, among individuals participating in any trial ($R=1$), the observed outcome $Y$ is independent of the trial random variable $S$, within treatment groups ($A = a$) and conditional on covariates $X$. Because this condition does not involve potential outcomes, it is testable using the observed data (e.g., using methods for comparing conditional densities). Furthermore, because $\{R = 0\} \Longleftrightarrow \{R=0,S=0\}$, we also obtain that $Y \indep S | (X, R, A = a)$. The independence condition in \eqref{eq:independencies4} implies that for every $a \in \mathcal A$ and every $x$ such that $f(x, S = 0) \neq 0$,
\begin{equation}\label{eq:observed_data_implications2}
  \E[Y | X = x, S = 1, A = a] = \ldots = \E[Y | X = x, S = m, A = a].
\end{equation} 
This restriction is testable using widely available parametric or non-parametric approaches for modeling the conditional expectations (e.g., \citep{racine2006testing, luedtke2019omnibus}). Such statistical testing may be a useful adjunct to assessments of the causal assumptions based on substantive knowledge. 

In practical terms, violations of the restrictions in displays \eqref{eq:independencies4} or \eqref{eq:observed_data_implications2} can occur, for example, when there are unmeasured common causes of trial participation $S$ and the outcome $Y$ (failure of condition \emph{A4}), when trial participation directly effects the outcomes (failure of condition \emph{A1}), or when there exists outcome-relevant variation in the treatments evaluated across trials (again, failure of condition \emph{A1}). Though formal statistical assessments of the restrictions are possible, they will often be challenging when $X$ is high-dimensional. Furthermore, formal assessments of the restrictions in \eqref{eq:independencies4} and \eqref{eq:observed_data_implications2} will in general not be able to pinpoint the specific identifiability condition whose failure explains the violation of the observed data restrictions. Thus, considerable background knowledge and scientific judgment will be needed to decide on the appropriateness of combining information across trials to learn about the target population.

\subsection{Identification of potential outcome means using the collection of trials}

When conditions \emph{A1} through \emph{A5} hold, the potential outcome mean in the target population can be identified using covariate, treatment, and outcome data from the collection of trials, and baseline covariate data from the target population. 

\begin{restatable}[Identification of potential outcome means]{theorem}{thmidentificationcollection}
\label{thm:identification_collection}
Under conditions A1 through A5, the potential outcome mean in the target population under treatment $a \in \mathcal A$, $\E[Y^a | R = 0 ]$, is identifiable by the observed data functional
\begin{equation} \label{eq:identification_collection_g}
    \begin{split}
  \psi(a) &\equiv \E\big[ \E[Y | X, R = 1, A = a ] \big| R = 0 \big],
    \end{split}
\end{equation}
which can be equivalently expressed as 
\begin{equation} \label{eq:identification_collection_w}
    \begin{split}
  \psi(a) &= \dfrac{1}{\Pr[R = 0 ]}  \E\left[   \dfrac{ I(R = 1, A = a) Y \Pr[R = 0| X] }{\Pr[R = 1| X] \Pr[A = a | X,  R = 1 ]  } \right].
    \end{split}
\end{equation}
\end{restatable}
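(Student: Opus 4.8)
The plan is to establish the theorem in two stages: first I would show that $\E[Y^a | R = 0]$ equals the outcome-regression functional $\psi(a)$ of \eqref{eq:identification_collection_g}, and then I would show by a change-of-measure computation that this functional coincides with the weighting expression in \eqref{eq:identification_collection_w}. The first stage is where the causal conditions do their work; the second uses only the law of iterated expectations and Bayes' rule.

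For the first stage, I would condition on the covariates and use iterated expectations to write $\E[Y^a | R = 0] = \E\big[ \E[Y^a | X, R = 0] \big| R = 0 \big]$, then rewrite the inner conditional expectation by chaining the derived independencies. From \eqref{eq:independencies2} we have $Y^a \indep R | X$, so $\E[Y^a | X, R = 0] = \E[Y^a | X, R = 1]$, which transports the counterfactual mean from the target population to the trials. From \eqref{eq:independencies3}, $Y^a \indep (S, A) | (X, R = 1)$ implies $Y^a \indep A | (X, R = 1)$, hence $\E[Y^a | X, R = 1] = \E[Y^a | X, R = 1, A = a]$. Finally, consistency (condition \emph{A1}) allows me to replace $Y^a$ by the observed $Y$ on the event $\{A = a\}$, giving $\E[Y^a | X, R = 1, A = a] = \E[Y | X, R = 1, A = a]$. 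Chaining these equalities yields $\E[Y^a | X, R = 0] = \E[Y | X, R = 1, A = a]$; substituting back into the outer expectation recovers \eqref{eq:identification_collection_g}.

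For the second stage, I would evaluate the expectation in \eqref{eq:identification_collection_w} by first iterating over $X$. The weight $\Pr[R = 0 | X] / \big( \Pr[R = 1 | X]\, \Pr[A = a | X, R = 1] \big)$ is a function of $X$ alone, so it can be pulled outside the inner expectation, leaving $\E[ I(R = 1, A = a)\, Y | X ] = \E[Y | X, R = 1, A = a]\, \Pr[A = a | X, R = 1]\, \Pr[R = 1 | X]$. The factors $\Pr[R = 1 | X]$ and $\Pr[A = a | X, R = 1]$ then cancel against the denominator, leaving $\Pr[R = 0 | X]\, \E[Y | X, R = 1, A = a]$ inside the outer expectation. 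Dividing by $\Pr[R = 0]$ and recognizing that $\Pr[R = 0 | X]\, f(X) / \Pr[R = 0]$ is the density $f(X | R = 0)$ by Bayes' rule converts the expectation over the pooled sample into an expectation over the target population, which is precisely $\psi(a)$ as in \eqref{eq:identification_collection_g}.

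I expect the main obstacle to be careful bookkeeping of well-definedness rather than any deep argument. Positivity of trial participation (condition \emph{A5}) guarantees $\Pr[R = 1 | X] > 0$ and positivity of treatment assignment (condition \emph{A3}) guarantees $\Pr[A = a | X, R = 1] > 0$ on the support where $f(x, R = 0) \neq 0$, so the inverse weights and all conditioning events are well-defined there; I would need to state these support restrictions precisely so that no division by zero occurs and every conditional expectation is meaningful. With those restrictions in place, the remaining manipulations are routine.
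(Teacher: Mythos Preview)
Your proposal is correct and coincides with the paper's own argument: the first stage is exactly the paper's ``alternative proof'' of the g-formula identity, which chains the derived independencies $Y^a \indep R \mid X$ and $Y^a \indep A \mid (X, R=1)$ with consistency, and the second stage is the same iterated-expectation/Bayes computation the paper uses, just run in the reverse direction (the paper starts from $\psi(a)$ and derives the weighting form, whereas you start from the weighting form and reduce to $\psi(a)$). Your remarks on positivity for well-definedness also match the paper's treatment.
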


The proof is given in Web Appendix A.

It also follows that, under the conditions of Theorem \ref{thm:identification_collection}, the average treatment effect in the target population comparing treatments $a$ and $a^\prime$ in $\mathcal A$ is also identifiable: $\E [Y^a - Y^{a^\prime} | R = 0] =  \E [Y^a | R = 0] -  \E [Y^{a^\prime} | R = 0] = \psi(a) - \psi(a^\prime)$.

\subsection{Identification under weaker conditions}\label{subsec:wekaer_positivity}

Practitioners of meta-analysis believe that by combining evidence from multiple randomized trials it should be possible to draw inferences about target populations that are -- in some vague sense -- broader than the population underlying each randomized trial. To give a concrete example: suppose that we have data from two trials of the same medications, one recruiting individuals with mild disease and the other recruiting individuals with severe disease. Suppose also that our target population includes some individuals with mild and some with severe disease and that effectiveness varies by disease severity. Intuition suggests that if inferences are ``transportable'' from each of the two trials to each subset of the target population defined by disease severity, then it should be possible to draw some conclusions about treatment effectiveness in the target population. Yet, in this setting, condition \emph{A5} is grossly violated (e.g., individuals with mild disease have zero probability of participating in one of the two trials). In the following two sub-sections, we give identification results that rely on weaker identifiability conditions and capture the intuition that when combining evidence from multiple trials we can draw inferences about target populations that are broader than the population underlying each randomized trial.

\subsubsection{Weakening the positivity conditions of Theorem \ref{thm:identification_collection}}

Suppose that the independence conditions $Y^a \indep R|X$ and $Y^a \indep A | (X, R=1)$ hold, taken either as primitive conditions or as implications of identifiability conditions \emph{A2} and \emph{A4} (as was done for Theorem \ref{thm:identification_collection}). Furthermore, consider the following positivity conditions: 

\vspace{0.1in}
\noindent
\emph{A3$^*$. Positivity of the probability of treatment in the collection of trials:} for each treatment $a \in \mathcal A$, if $f(x,R=1) \neq 0$, then $\Pr[A = a | X = x, R = 1] > 0$.

\vspace{0.1in}
\noindent
\emph{A5$^*$. Positivity of the probability of participation in the collection of trials:} if $f(x,R=0) \neq 0$, then $\Pr[R = 1 | X = x] > 0$.

It is worth noting that these positivity conditions are weaker than the corresponding conditions \emph{A3} and \emph{A5}, in the sense that \emph{A3$^*$} is implied by, but does not imply, \emph{A3}; and \emph{A5$^*$} is implied by, but does not imply, \emph{A5}. The difference between the two sets of assumptions is practically important: informally, condition \emph{A3} requires every treatment to be available in every trial and for every covariate pattern that can occur in that trial, and condition \emph{A5} requires every covariate pattern that can occur in the target population to be represented in every trial. In contrast, condition \emph{A3$^*$} requires every treatment to be available in the aggregated collection of trials and for any covariate pattern that can occur in the collection (but not necessarily in every trial); condition \emph{A5$^*$} requires every covariate pattern that can occur in the target population to occur in the aggregated collection of trials (but not necessarily in every trial).

Using these modified conditions, we obtain the following result (we give the formal arguments in Web Appendix A and Web Appendix B):

\begin{restatable}[Identification of potential outcome means under weaker positivity conditions]{theorem}{thmidentificationcollectionweakerpos}
\label{thm:identification_collection_weakerpos}
If $Y^a \indep R|X$ and $Y^a \indep A | (X, R=1)$ and conditions A1, A3$^*$, and A5$^*$ hold, then the potential outcome mean in the target population under treatment $a \in \mathcal A$, $\E[Y^a | R = 0 ]$, is identifiable by $\psi(a)$.
\end{restatable}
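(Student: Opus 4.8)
The plan is to reproduce the chain of equalities that identifies $\E[Y^a | R = 0]$ with $\psi(a)$ in the proof of Theorem~\ref{thm:identification_collection}, and then to verify that each step survives when only the weaker positivity conditions \emph{A3$^*$} and \emph{A5$^*$} are imposed. First I would apply the law of total expectation to write $\E[Y^a | R = 0] = \E\big[ \E[Y^a | X, R = 0] \big| R = 0 \big]$, iterating over the covariate distribution within the target population.

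Next I would use the hypothesis $Y^a \indep R | X$ to obtain $\E[Y^a | X, R = 0] = \E[Y^a | X, R = 1]$. This substitution requires that $\E[Y^a | X = x, R = 1]$ be well defined for every $x$ in the support of the target-population covariate distribution, that is, for every $x$ with $f_{X,S}(x, S = 0) \neq 0$; condition \emph{A5$^*$} supplies exactly this, since it guarantees $\Pr[R = 1 | X = x] > 0$ on that support. I would then invoke $Y^a \indep A | (X, R = 1)$ to write $\E[Y^a | X, R = 1] = \E[Y^a | X, R = 1, A = a]$, a step that is legitimate provided $\Pr[A = a | X = x, R = 1] > 0$ on the relevant support; this is precisely condition \emph{A3$^*$}, which applies because \emph{A5$^*$} has already placed each relevant $x$ inside the support of the pooled collection, $f_{X,R}(x, R = 1) \neq 0$. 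Finally, consistency \emph{A1} yields $\E[Y^a | X, R = 1, A = a] = \E[Y | X, R = 1, A = a]$, and assembling these equalities gives $\E[Y^a | R = 0] = \E\big[ \E[Y | X, R = 1, A = a] \big| R = 0 \big] = \psi(a)$, the functional in~\eqref{eq:identification_collection_g}. The equivalence with the weighting form~\eqref{eq:identification_collection_w} then follows from the same algebraic manipulation used for Theorem~\ref{thm:identification_collection} and need not be repeated.

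The key conceptual point, and the only place where this argument genuinely differs from the proof of Theorem~\ref{thm:identification_collection}, is that every conditioning event in $\psi(a)$ references the \emph{pooled} collection of trials through the indicator $R = 1$, and never any individual trial $S = s$. Consequently the identification never calls for within-trial positivity: it suffices that each treatment appear somewhere in the aggregated collection (\emph{A3$^*$}) and that each target-population covariate pattern be represented somewhere in the aggregated collection (\emph{A5$^*$}). I expect the main obstacle to be bookkeeping rather than mathematics, namely making the support arguments watertight so that the two positivity conditions are invoked over the correct covariate sets and the chain of conditional expectations is everywhere well defined. This is exactly what renders the weaker conditions, rather than \emph{A3} and \emph{A5}, sufficient, and it is what formalizes the intuition behind the motivating hypertension example.
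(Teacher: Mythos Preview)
Your proposal is correct and essentially mirrors the paper's own argument: the paper proves Theorem~\ref{thm:identification_collection_weakerpos} by pointing to the ``alternative proof'' of Theorem~\ref{thm:identification_collection}, which is precisely the chain $\E[Y^a | R=0] = \E\big[\E[Y^a | X, R=0]\big| R=0\big] = \E\big[\E[Y^a | X, R=1]\big| R=0\big] = \E\big[\E[Y^a | X, R=1, A=a]\big| R=0\big] = \E\big[\E[Y | X, R=1, A=a]\big| R=0\big]$ that you describe, together with the observation that the positivity needed at each step is exactly \emph{A5$^*$} and \emph{A3$^*$} because all conditioning is on $R=1$ rather than on a specific $S=s$. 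Your bookkeeping remark about supports is the same point the paper makes when checking that ``all quantities are well-defined.''
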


\subsubsection{Identification under weaker overlap and exchangeability conditions}

We will now show that the exchangeability conditions for potential outcome means can also be relaxed, alongside the positivity conditions; to do so, we need to introduce some additional notation in order to keep track of the subsets of the collection of trials where different covariate patterns can occur.  

Let $\mathcal X_j$ for $j \in \mathcal \{0, 1, \ldots, m\}$ denote the support of the random vector $X$ in the subset of the population with $S = j$. That is, $\mathcal X_j \equiv \{ x: f(x | S = j) > 0 \}$. For each covariate pattern $X = x$ that can occur in the collection of trials, that is, for each $x \in \bigcup\limits_{s \in \mathcal S} \mathcal X_s$, define $\mathcal S_x$ as the subset of trials in $\mathcal S$ such that $x$ belongs in their support. That is, $\mathcal S_x \equiv \{ s: s \in \mathcal S, x \in \mathcal X_s \}$. Intuitively, $\mathcal S_x$ denotes the subset of trials in the collection $\mathcal S$ where the covariate pattern $X = x$ can occur. Using this additional notation, consider the following identifiability conditions:

\vspace{0.1in}
\noindent
\emph{A4$^{\dagger}$. Exchangeability in mean over $S$:} For every $x$ such that $f(x, S =0) \neq 0$ and every $s \in \mathcal S_x$, $\E[Y^a | X = x, S = 0 ] = \E [Y^a | X = x, S = s]$.

\vspace{0.1in}
\noindent
\emph{A5$^{\dagger}$. Overlap of the collection $\mathcal S$ with the target population:} $\bigcup\limits_{s \in \mathcal S} ( \mathcal X_s \cap \mathcal X_0 ) = \mathcal X_0$.

\noindent
Using the above two conditions in the place of conditions \emph{A4} and \emph{A5} still allows for identification of the potential outcome means, using the following result. 

\begin{restatable}[Identification of potential outcome means under weaker conditions]{theorem}{thmidentificationwekapos}
\label{thm:identification_pos}
Under identifiability conditions A1 through A3, A4$^\dagger$, and A5$^\dagger$, for every $a \in \mathcal A$, $\E[Y^a | R = 0]$ is identifiable by $$ \phi(a) \equiv  \int\E \big[Y | X = x, I(S \in \mathcal S_x) = 1, A = a \big] f(x | S = 0) dx. $$
\end{restatable}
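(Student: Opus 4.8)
The plan is to establish the pointwise identity $\E[Y^a \mid X = x, S = 0] = \E[Y \mid X = x, I(S \in \mathcal S_x) = 1, A = a]$ for every $x$ with $f(x \mid S = 0) > 0$, and then integrate it against $f(x \mid S = 0)$. Starting from the target quantity, the law of iterated expectations, conditioning on $X$ within the stratum $S = 0$, gives $\E[Y^a \mid S = 0] = \int \E[Y^a \mid X = x, S = 0] \, f(x \mid S = 0) \, dx$, so the whole statement reduces to verifying the pointwise identity, after which substituting it into the integrand immediately yields $\phi(a)$.

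First I would check that every conditional expectation appearing in the argument is well-defined. Condition \emph{A5$^\dagger$} guarantees that for each $x \in \mathcal X_0$ the set $\mathcal S_x$ is nonempty, so the conditioning event $\{X = x, I(S \in \mathcal S_x) = 1\}$ has positive probability; condition \emph{A3} then gives $\Pr[A = a \mid X = x, S = s] > 0$ for each $s \in \mathcal S_x$ (since $s \in \mathcal S_x$ means $f(x \mid S = s) > 0$), hence $\Pr[A = a \mid X = x, I(S \in \mathcal S_x) = 1] > 0$, so further conditioning on $A = a$ is legitimate. For a fixed value $x$, the set $\mathcal S_x$ is a fixed subset of $\mathcal S$, which is what makes the indicator $I(S \in \mathcal S_x)$ behave like an ordinary conditioning variable.

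The core step is to evaluate $\E[Y \mid X = x, I(S \in \mathcal S_x) = 1, A = a]$. By consistency (\emph{A1}) I replace $Y$ by $Y^a$ on the event $A = a$, and then marginalize over the trial index by writing the expectation as the $\mathcal S_x$-weighted average $\sum_{s \in \mathcal S_x} \E[Y^a \mid X = x, S = s, A = a] \, \Pr[S = s \mid X = x, I(S \in \mathcal S_x) = 1, A = a]$. Within each term, exchangeability over $A$ in trial $s$ (\emph{A2}) removes the conditioning on $A = a$, giving $\E[Y^a \mid X = x, S = s]$, and mean exchangeability over $S$ (\emph{A4$^\dagger$}) replaces this by $\E[Y^a \mid X = x, S = 0]$ for every $s \in \mathcal S_x$. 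Since this value is constant in $s$, it factors out of the sum and the remaining weights sum to one, leaving exactly $\E[Y^a \mid X = x, S = 0]$, which is the desired pointwise identity.

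The main obstacle is bookkeeping rather than deep difficulty: one must argue carefully that conditioning on the $x$-dependent event $\{S \in \mathcal S_x\}$ is coherent and that \emph{A2} may be invoked trial-by-trial, rather than through a pooled statement $Y^a \indep A \mid (X, R = 1)$, which need not hold here. The payoff of this weaker formulation is precisely that \emph{A4$^\dagger$} requires only mean exchangeability on the trials in $\mathcal S_x$ that actually overlap $x$, and \emph{A5$^\dagger$} requires only that the union of trial supports cover $\mathcal X_0$, so no single trial need match the full target covariate distribution.
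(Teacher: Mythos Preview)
Your proof is correct and follows essentially the same route as the paper's: both establish the pointwise identity $\E[Y^a \mid X = x, S = 0] = \E[Y \mid X = x, I(S \in \mathcal S_x) = 1, A = a]$ by reducing the pooled expectation to trial-specific terms via \emph{A1}--\emph{A3}, collapsing each to $\E[Y^a \mid X = x, S = 0]$ via \emph{A4$^\dagger$}, and then integrating against $f(x \mid S = 0)$. The only cosmetic difference is direction---you decompose the observable quantity as a weighted sum and simplify each summand, whereas the paper shows each trial-specific mean equals the target and then notes the pooled mean must too---but the logical content is identical.
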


The proof is given in Web Appendix B.

In simple cases, as in our hypothetical example of effect modification by disease severity where the positivity violation was due to a single binary covariate, the above result simply suggests that the conditional mean of the outcome $Y$ can be modeled separately in each group of trials defined by that binary covariate. For more complicated violations of positivity condition \emph{A5}, especially those involving multiple continuous covariates, the above result is mostly useful, not for suggesting a particular modeling strategy, but for showing that some degree of interpolation when modeling the conditional expectation may be relatively benign, when there is adequate overlap between the collection of trials and the target population, in the sense of condition \emph{A5}$^\dagger$.

\subsection{Identification under exchangeability in effect measure}\label{subsec:id_transport_in_measure}

Up to this point, we have examined conditions that are sufficient for identifying both potential outcome means and average treatment effects. If interest is restricted to average treatment effects, identification is possible under a weaker condition of exchangeability in effect measure over $S$ given baseline covariates, essentially, a requirement that conditional average treatment effects, but not necessarily potential outcome means, can be transported from each trial in the collection $S$ to the target population; see \citet{dahabreh2018generalizing, dahabreh2020transportingStatMed} for a similar argument in the context of transporting inferences from a single trial. Under this weaker condition, however, the individual potential outcome means are not identifiable. Specifically, consider the following condition: 

\vspace{0.1in}
\noindent
\emph{A4$^{\ddagger}$. Exchangeability in effect measure over $S$:} for every pair of treatments $a$ and $a^\prime$, with $a \in \mathcal A$ and $a^\prime \in \mathcal A$, for every $s \in \mathcal S$, and for every $x$ such that $f(x, S = s) \neq 0$, we have $\E[Y^a - Y^{a^\prime} | X = x, S = s] = \E[Y^a - Y^{a^\prime} | X = x, R = 0 ]$.

It is easy to see that this condition, combined with conditions \emph{A1} through \emph{A3}, and \emph{A5}, implies that the observed data trial-specific conditional mean difference comparing treatments $a \in \mathcal A$ and $a^\prime \in \mathcal A$, $\E[Y|X,S=s,A=a] - \E[Y|X,S =s, A= a^\prime] \mbox{ does not vary over } s \in \mathcal S$ for all $X$ values with positive density in the target population. Using this implication, we now give an identification result for the average treatment effect under the weaker condition of exchangeability in effect measure.

\begin{restatable}[Identification under exchangeability in effect measure]{theorem}{thmidentificationdiff}
\label{thm:identification_diff}
Under conditions A1 through A3, A4$^{\ddagger}$, and A5,  the average treatment effect in the target population comparing treatments $a \in \mathcal A$ and $a^\prime \in \mathcal A$, $\E[Y^a - Y^{a^\prime} | R = 0]$, is identifiable by the observed data functional $\rho(a,a^\prime) \equiv \E \big[ \tau(a, a^\prime; X) \big| R = 0 \big]$, where $\tau(a, a^\prime; X) \equiv \E[Y | X, S = s, A = a] -  \E[Y | X, S=s, A = a^\prime]$ does not vary over $s \in \mathcal S$. Furthermore, $\rho(a,a^\prime)$ can be re-expressed as $\rho(a,a^\prime) = \dfrac{1}{\Pr[R = 0 ]} \E \left[ w(a, a^\prime; R,X,S,A)  Y  \right]$, where 
\begin{equation*}
    w(a, a^\prime; R,X,S,A) \equiv \left( \dfrac{I(R = 1, A = a)}{\Pr[A = a | X, S, R = 1]} - \dfrac{I(R = 1, A = a^\prime)}{\Pr[A = a^\prime | X, S, R = 1]}  \right) \dfrac{ \Pr[R = 0 | X]}{\Pr[R = 1 | X]} .
\end{equation*}
\end{restatable}

The proof is given in Web Appenidx C. Note in passing that positivity condition \emph{A5} in Theorem \ref{thm:identification_diff} can be relaxed in a way analogous to what was done in Theorem \ref{thm:identification_pos}. 

Because the potential outcome means $\E[Y^a | R = 0]$ in the target population often are of inherent interest, in the rest of this paper we focus on estimating the identifying functionals $\psi(a)$, for $a \in \mathcal  A$, following Theorems \ref{thm:identification_collection} and \ref{thm:identification_collection_weakerpos}.

\section{Remarks on the sampling model}\label{sec:sampling_model}

We now revisit the assumed sampling scheme that we described in Section \ref{sec:data_causal_quant} and introduce some notation to help us distinguish between the \emph{population sampling model}, where data are obtained by simple random sampling from a common super-population, and a \emph{biased sampling model}, where data are obtained by stratified random sampling from the target population and the populations underlying the trials, with unknown and possibly variable sampling fractions.  

Under a nonparametric model for the observed data, the density of the law of the observable data $O = (X,S,R,A,Y)$, using $p$ to generically denote densities, can be written as $p(r,x,s,a,y) = p(r) p(x|r) p(s|x, r) p(a|r,x,s) p(y|r, x, s, a)$. By definition, $S$ contains all the information contained in $R$, which implies that $p(a|r,s,x) = p(a|s,x)$. Furthermore, by \eqref{eq:independencies4}, $p(y|r= 1, x, a) = p(y|r=1, s, x, a)$; and the equivalence $\{R=0, S=0\} \iff \{S=0\} $ gives $p(y|r= 0, x, a) = p(y|r=0, s, x, a)$; thus, $p(y|r, x, a) = p(y|r, s, x, a)$. We conclude that, under simple random sampling, the density of the observable data would be $$p(r,x,s,a,y) = p(r) p(x|r) p(s|x,r) p(a|x,s) p(y|r, x, a).$$ This density can be viewed as reflecting a \emph{population sampling model}.

The data collection approach described in Section \ref{sec:data_causal_quant}, however, induces a \emph{biased sampling model} \citep{bickel1993efficient}. What we mean here is that the sampling fraction from the sub-population underlying each randomized trial ($S = 1, \ldots, m$) and the target population sample ($R = S = 0$) is not under the control of the investigators (and, typically, unknown to them) and reflects the particular circumstances (e.g., recruiting practices) of how sampling was conducted (e.g., convenience sampling is typical in randomized trials). Thus, in the data, the ratios $\dfrac{n_j}{ n}$, for $j \in \{0,1, \ldots, m\}$, do not reflect the population probabilities of belonging to the subset of the population with $S = j$, because the sampling fraction from each subset is unknown (and possibly variable between subsets). As a technical condition, we require that as $n \longrightarrow \infty$, $\dfrac{n_j}{ n} \longrightarrow \pi_j > 0$, for $j \in \{0,1, \ldots, m\}.$ Nevertheless, under the biased sampling model, the limiting values, $\pi_j$, are not necessarily equal to the super-population probabilities under the population sampling model.

Under this more plausible sampling model, the density of the observable data can be written as
\begin{equation*}
    \begin{split}
       q(r,x,s,a,y) &=  q(r) q(x|r) q(s|x,r) p(a|x,s) p(y|r, x, a) \\
        &= q(r)\{ q(x|r = 1) \}^r \{ q(x|r = 0) \}^{1-r} q(s|x,r) p(a|x,s) p(y|r, x, a) \\
        &= q(r)\{ q(x|r = 1) \}^r \{ p(x|r = 0) \}^{1-r} q(s|x,r) p(a|x,s) p(y|r, x, a).
    \end{split}
\end{equation*}
Here, $q(r) \neq p(r)$ and $q(s|x, r) \neq p(s|x, r)$ to reflect the fact that the sampling fractions for participants in the trials and members of the target population are not equal to the super-population probabilities under the population sampling model, but instead depend on the complex processes of randomized trial design and conduct (i.e., the biased sampling model). For the same reason, $ q(x|r) \neq  p(x|r) $, because in general we do not expect $ q(x|r=1) =  p(x|r=1) $; the distribution of $X$ in the collection of trials depends on the sampling from the different population underlying each trial. Nevertheless, we expect $p(x|r = 0)  = q(x|r = 0) $ because we take a simple random sample from the target population under the biased sampling model. In contrast, the terms $p(a|x,s)$ and $p(y|a,x,r)$ are the same in both densities, reflecting the property of stratified (by $S$) random sampling of the population underlying each trial or the target population, and the independence condition in \eqref{eq:independencies4}.

The functional $\psi(a)$ in Theorems \ref{thm:identification_collection} and \ref{thm:identification_collection_weakerpos} only depends on components of $p(r,x,s,a,y)$ that are also identifiable under the distribution $q(r,s,x,a,y)$ induced by the biased sampling. Specifically, using the representation in (\ref{eq:identification_collection_g}), $\psi(a)$ depends on $p(y|r=1,x,a)$, and $p(x|r = 0) = p(x|s = 0)$, both of which are identifiable under the biased sampling model.

\section{Estimation and inference} \label{sec:estimation}

We now turn our attention to the estimation of the functional $\psi(a)$, for $a \in \mathcal A$. The results presented in this section apply to data obtained under the population sampling model $p(r,x,s,a,y)$ and the biased sampling model $q(r,x,s,a,y)$, because $\psi(a)$ is identifiable under both. In fact, the results in \citet{breslow2000semi} imply that influence functions for $\psi(a)$ under sampling from  $q(r,x,s,a,y)$ are equivalent to those under sampling from $p(r,x,s,a,y)$, but with densities from $q(r,x,s,a,y)$ replacing those under $p(r,x,s,a,y)$; see \citet{kennedy2015semiparametric} for a similar argument in the context of matched cohort studies and \citet{dahabreh2020transportingStatMed} in the context of transporting inferences from a single trial. In the remainder of the paper, we assume that analysts will be working under $q(r,x,s,a,y)$, because the biased sampling model is more realistic for applied meta-analyses.

\subsection{Proposed estimator} 

In Web Appendix D we show that the first-order influence function \citep{bickel1993efficient} of $\psi(a)$, for $a \in \mathcal A$, under the nonparametric model for the observable data, is
\begin{equation*}
  \begin{split}
    &\mathit\Psi^1_{q_0}(a) = \dfrac{1}{\pi_{q_0}}  \Bigg\{  \dfrac{I(R = 1, A = a) \Pr_{q_0}[R = 0 | X]}{ \Pr_{q_0}[R = 1 | X] \Pr_{q_0}[A =a | X, R = 1] } \Big\{ Y-  \E_{q_0}[Y | X, R = 1, A = a] \Big\}   \\
      &\quad\quad\quad\quad\quad\quad\quad\quad\quad+ I(R=0) \Big\{  \E_{q_0}[Y | X, R = 1,  A = a]  - \psi_{q_0}(a)\Big\}  \Bigg\},
  \end{split}
\end{equation*}
where $\pi_{q_0} = \Pr_{q_0}[R = 0]$ and the subscript $q_0$ denotes that all quantities are evaluated at the ``true'' data law. Specifically, we show that $\mathit\Psi^1_{q_0}(a)$ satisfies $\dfrac{\partial \psi_{q_t}(a)}{\partial t} \Bigg|_{t=0}  = \E[\mathit\Psi^1_{q_0}(a) u(O)]$, where $u(O)$ denotes the score of the observable data $O = (R, X, S, A, Y)$ and the left hand side of the above equation is the pathwise derivative of the target parameter $\psi(a)$. Theorem 4.4 in \citep{tsiatis2007} shows that $\mathit\Psi_{q_0}^1(a)$ lies in the tangent set; it follows (see, e.g., \citep{van2000asymptotic}, page 363) that $\mathit\Psi^1_{q_0}(a)$ is the efficient influence function under the nonparametric model (it is in fact the unique influence function under that model). For a more thorough discussion of semiparametric efficiency theory and precise definitions of pathwise derivatives and the tangent space we refer to Chapter 25 in \citep{van2000asymptotic}. Furthermore, in Web Appendix D we show that $\mathit\Psi^1_{q_0}(a)$ is the efficient influence function under semiparametric models incorporating the restriction $Y\indep S|(X, A = a,R)$ or where the probability of treatment conditional on covariates and trial participation status is known (e.g., if all trials have the same randomization probabilities).

The influence function above suggests the estimator
\begin{equation}\label{eq:estimator}
    \begin{split}
  \widehat \psi_{\text{\tiny aug}}(a) &= \dfrac{1}{ n \widehat \pi } \sum\limits_{i=1}^{n} \Bigg\{ I(R_i = 1, A_i = a) \dfrac{1 - \widehat p(X_i)}{\widehat p(X_i) \widehat e_a(X_i)}  \Big\{ Y_i - \widehat g_a(X_i) \Big\} + I(R_i = 0) \widehat g_a(X_i) \Bigg\},
  \end{split}
\end{equation}
where $\widehat \pi = n^{-1} \sum_{i=1}^n I(R_i = 0)$, $\widehat g_a(X)$ is an estimator for $\E[Y | X, R = 1, A = a]$, $\widehat e_a(X)$ is an estimator for $\Pr[A = a| X, R = 1]$, and $\widehat p(X)$ is an estimator for $\Pr[R = 1 | X]$. Note that $\widehat \psi_{\text{\tiny aug}}(a)$ involves data on $(R = 1, X, A, Y)$ from trial participants and data on $(R = 0, X)$ from the sample of the target population; thus, treatment and outcome data from the target population are not necessary. Furthermore, it is natural to estimate the average treatment effect in the target population for comparing treatments $a$ and $a^\prime$ in $\mathcal A$, that is, $\E[Y^a - Y^{a^\prime} | R = 0]$ , using the contrast estimator $\widehat \delta (a, a^\prime) = \widehat \psi_{\text{\tiny aug}}(a) - \widehat \psi_{\text{\tiny aug}}(a^\prime).$

In practical applications, analysts may be able to improve the performance of the estimator by normalizing the weights so that their sum is equal to the total number of observations in the target population sample \citep{dahabreh2018generalizing, dahabreh2019relation, dahabreh2020transportingStatMed}. This normalization, which was originally proposed for survey analyses \citep{hajek1971comment}, may be particularly useful when $\dfrac{1 - \widehat p(X)}{\widehat p(X) \widehat e_a(X)} $ is highly variable over the observations in the trials \citep{robins2007}.

\paragraph{Asymptotic properties of estimators for potential outcome means:} Let $g_{a}^{*}(X), e_a^*(X)$, and $p^{*}(X)$ denote the asymptotic limits (assumed to exist) of $\widehat g_{a}(X), \widehat e_a(X)$, and $ \widehat p (X)$, respectively. Finally, define $\widehat \gamma = \left\{ \frac{1}{n} \sum_{i=1}^n I(R_i = 0 ) \right\}^{-1}  = \widehat \pi^{-1}$ as an estimator for $\gamma^* = \Pr[R=0]^{-1}$.

For general functions $\gamma', g'_{a}(X), e'_{a} (X),$ and $p'(X)$ define 
\begin{equation*}
  \begin{split}
H(\gamma'&, g'_{a}(X), e'_{a} (X), p'(X)) =  \gamma' \Bigg\{ I(R = 0) g'_{a}(X) + I(R = 1, A =a ) \dfrac{  1 - p' (X) }{p'(X) e'_{a} (X)} \Big\{Y  -  g'_{a}(X)  \Big\}    \Bigg\}.
  \end{split}
\end{equation*}
Using notation from \citet{van1996weak}, define $\mathbb{P}_n\big(v(W)\big) = n^{-1} \sum_{i=1}^n v(W_i)$ and $\mathbb{G}_n(v(W)) = \sqrt{n}\big(\mathbb{P}_n(v(W)) - \E[v(W)]\big)$, for some function $v$ and a random variable $W$. Using this notation, $\widehat \psi_{\text{\tiny aug}}(a) = \mathbb{P}_n\big(H(\widehat \gamma, \widehat  g_{a}(X), \widehat e_{a} (X), \widehat p(X))\big)$.

To establish asymptotic properties of $\widehat \psi(a)$, we make the following assumptions:
\begin{enumerate}
\item[(i)] The sequence $H(\widehat \gamma, \widehat  g_{a}(X), \widehat e_{a} (X), \widehat p(X))$ and its limit $H(\gamma^*, g^*_{a}(X), e_{a}^* (X), p^*(X))$ fall in a Donsker class \citep{van1996weak}.
\item[(ii)] $\big|\big|H(\widehat \gamma, \widehat  g_{a}(X), \widehat e_{a} (X), \widehat p(X)) - H(\gamma^*, g^*_{a}(X), e_{a}^* (X), p^*(X))\big|\big|_2 \overset{a.s.}{\longrightarrow} 0.$
\item[(iii)] $ \E\big[H(\gamma^*, g^*_{a}(X), e_{a}^* (X), p^*(X))^2\big] < \infty$.
\item[(iv)] At least one of the following two assumptions holds: (a) $\widehat p(X) \overset{a.s.}{\longrightarrow} p^*(X) = \Pr[ R = 1 | X ]$ and $\widehat e_a(X) \overset{a.s.}{\longrightarrow} e_a^*(X) = \Pr[A=a|X, R=1]$; or (b) $\widehat g_{a}(X) \overset{a.s.}{\longrightarrow} g_{a}^{*}(X) = \E[Y|X, R=1, A = a]$, $p^*(X) \geq \varepsilon, \mbox{ and } e_a^*(X) \geq \varepsilon \mbox{ a.s.}$ for some $\varepsilon > 0.$ 
\end{enumerate}
The following Theorem gives the asymptotic properties of $\widehat \psi_{\text{\tiny aug}}(a)$; a detailed proof is given in Web Appendix E. 
\begin{restatable}{theorem}{thmestimationSA}
\label{thm:estimation_g_SA}
If assumptions \emph{(i)} through \emph{(iv)} hold, then 
\begin{enumerate}
\item $\widehat \psi_{\text{\tiny aug}}(a) \overset{a.s.}{\longrightarrow} \psi(a)$; and
\item $\widehat \psi_{\text{\tiny aug}}(a)$ has the asymptotic representation
\begin{equation}
\sqrt{n} \big(\widehat \psi_{\text{\tiny aug}}(a)  - \psi(a)\big) = \mathbb{G}_n\big(H(\gamma^*, g^*_{a}(X), e_{a}^* (X), p^*(X))\big) + Rem + o_P(1), \label{As-Rep}
\end{equation}
where $\mathbb{G}_n\big(H(\gamma^*, g^*_{a}(X), e_{a}^* (X), p^*(X))\big)$ is asymptotically normal and
\begin{align}
Rem  &\leq \sqrt{n} O_P\bigg( \Big(\big|\big|\Pr[R = 1|X] - \widehat p(X) \big|\big|_2 + \big|\big| \Pr[A=a|X, R=1] - \widehat e_a(X)\big|\big|_2 \Big)  \nonumber \\
&\quad \quad \quad \quad \quad \quad \quad \quad \times \big|\big| \widehat g_{a}(X) -  \E[Y|X, R=1, A=a] \big|\big|_2 \bigg). \label{As-Rep-2}
\end{align}
\end{enumerate}
\end{restatable}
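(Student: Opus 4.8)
\section*{Proof plan for Theorem~\ref{thm:estimation_g_SA}}

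The plan is to split $\sqrt n\big(\widehat\psi_{\text{\tiny aug}}(a)-\psi(a)\big)$ into an empirical-process (stochastic) piece and a drift (bias) piece, and to attack them with empirical-process theory and a direct double-robustness calculation, respectively. Writing $\widehat H \equiv H(\widehat\gamma,\widehat g_a,\widehat e_a,\widehat p)$ and $H^* \equiv H(\gamma^*,g^*_a,e^*_a,p^*)$, and recalling that $\widehat\psi_{\text{\tiny aug}}(a)=\mathbb{P}_n(\widehat H)$, I would add and subtract $\mathbb{P}_n$ and $\E$ of the integrand at its limit to obtain
\[
\sqrt n\big(\widehat\psi_{\text{\tiny aug}}(a)-\psi(a)\big)=\mathbb{G}_n(\widehat H)+\sqrt n\big(\E[\widehat H]-\psi(a)\big),
\]
where in the second term the nuisance estimators are frozen at their realized values, so that $\E[\widehat H]$ is a conditional expectation over a fresh observation.

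For the consistency claim, I would use that a Donsker class is Glivenko--Cantelli, so by assumption (i) the empirical averages of $\widehat H$ and $H^*$ converge uniformly to their population counterparts; together with the $L_2$ convergence in (ii) this reduces part~1 to the single identity $\E[H^*]=\psi(a)$. I would prove this double-robustness identity by iterated expectations, conditioning on $X$. Under case (a) of (iv), with $p^*=\Pr[R=1|X]$ and $e^*_a=\Pr[A=a|X,R=1]$, the inverse-probability factor collapses the weighted-residual term to $\E_X\big[\Pr[R=0|X]\{\E[Y|X,R=1,A=a]-g^*_a(X)\}\big]$, exactly cancelling the misspecification of the outcome-regression term. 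Under case (b), the inner residual $\E[Y|X,R=1,A=a]-g^*_a(X)$ is zero, so only the outcome-regression term survives and already integrates to $\psi(a)$ by the representation in \eqref{eq:identification_collection_g}.

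For the asymptotic representation, the stochastic term is routine: by the equicontinuity lemma for Donsker classes (e.g., Lemma~19.24 of \citep{van2000asymptotic}), assumptions (i) and (ii) give $\mathbb{G}_n(\widehat H)=\mathbb{G}_n(H^*)+o_P(1)$, and $\mathbb{G}_n(H^*)$ is asymptotically normal by the central limit theorem under the finite second-moment condition (iii). All the substantive work is therefore in the drift term $\sqrt n\big(\E[\widehat H]-\psi(a)\big)$, which is the source of $Rem$.

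The main obstacle is to show this drift has the advertised second-order product form. I would evaluate $\E[H(\gamma',g'_a,e'_a,p')]$ at arbitrary fixed nuisances by conditioning on $X$; writing $g_{0,a}(X)=\E[Y|X,R=1,A=a]$, this yields (up to the $\gamma'$ factor) $\psi(a)$ plus a single cross term $\E_X\big[\{g'_a(X)-g_{0,a}(X)\}\,C(p',e';X)\big]$, where $C(p',e';X)$ is an explicit function that vanishes identically when $p'=\Pr[R=1|X]$ and $e'_a=\Pr[A=a|X,R=1]$. Exhibiting this exact factorization is the crux, since it certifies that the bias is a product of the outcome-model error and the treatment/participation-model errors rather than a non-negligible first-order term. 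I would then bound $|C(p',e';X)|\lesssim |p'(X)-\Pr[R=1|X]|+|e'_a(X)-\Pr[A=a|X,R=1]|$ using that $p',e'_a$ are bounded away from $0$ (guaranteed in case (b), and inherited from positivity in case (a)), and finish with Cauchy--Schwarz to obtain $Rem\le \sqrt n\,O_P\big((\|\Pr[R=1|X]-\widehat p\|_2+\|\Pr[A=a|X,R=1]-\widehat e_a\|_2)\,\|\widehat g_a-g_{0,a}\|_2\big)$. Finally, I would note that $\widehat\gamma=\widehat\pi^{-1}$ is $\sqrt n$-consistent and bounded near $\gamma^*$, so it contributes only a multiplicative constant to $Rem$ (leaving the product structure intact) together with an asymptotically linear term absorbed into the leading Gaussian piece.
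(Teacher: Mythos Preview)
Your proposal is correct and follows essentially the same route as the paper's proof: the same add-and-subtract decomposition into $\mathbb{G}_n(\widehat H)-\mathbb{G}_n(H^*)$, $\mathbb{G}_n(H^*)$, and the drift term $T=\sqrt{n}\big(\E[\widehat H]-\psi(a)\big)$; the same appeal to Donsker equicontinuity for the first piece and the CLT for the second; the same two-case iterated-expectation argument for double robustness; and the same algebraic reduction of $T$ to a product of outcome-model and participation/treatment-model errors followed by Cauchy--Schwarz. Your treatment of $\widehat\gamma$ is, if anything, slightly more explicit than the paper's, which simply invokes $\sqrt{n}(\widehat\gamma-\gamma^*)=O_P(1)$ together with double robustness to absorb the difference into an $o_P(1)$ term.
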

If $\widehat  g_{a}(X), \widehat  p(X), \widehat e_{a} (X)$ and $g^*_{a}(X), p^*(X), e_{a}^* (X)$ are Donsker; $p^*(X)$ and $e_a^*(X)$ are uniformly bounded away from zero -- assumption (iv); and $g^*_{a}(X)$ and $Y$ are uniformly bounded, then $H(\widehat \gamma,\widehat  g_{a}(X), \widehat e_{a} (X), \widehat p(X))$ and its limit are Donsker \citep{kosorok2008introduction}. 

Assumptions (i), (ii), and (iii) are standard assumptions used to show asymptotic normality of M-estimators \citep{van2000asymptotic}. Assumption (i) restricts the flexibility of the models used to estimate the nuisance parameters (and their corresponding limits). But, it still covers a wide range of commonly used estimators such as parametric Lipschitz classes and VC classes \citep{van1996weak}. For data-adaptive estimators, Donsker assumptions can be relaxed using sample splitting \citep{robins2008higher}.

Assumption (iv) indicates that the estimator $\widehat \psi_{\text{\tiny aug}}(a)$ is doubly robust, in the sense that it converges almost surely to $\psi(a)$ when either (1) the model for the conditional outcome mean $\E[Y | X, R = 1, A = a]$ is correctly specified, so that $\widehat g_{a}(X)$ converges to the true conditional expectations almost surely; or (2) the model for the probability of participation in any trial $\Pr[R = 1 | X]$ and the model for the probability of treatment assignment $\Pr[A=a|X, R=1]$ are correctly specified, so that $\widehat p(X)$ and $\widehat e_a(X)$ converge to the true conditional probabilities almost surely. When the treatment assignment is the same for all the trials $\Pr[A=a|X, R=1]$ is known and when the treatment assignment only depends on a few categorical covariates $\Pr[A=a|X, R=1]$ is easy to estimate consistently. But, when the treatment assignment mechanism depends on continuous covariates, as would be the case in pooled analyses of observational studies, estimating $\Pr[A=a|X, R=1]$ may be more challenging.

The asymptotic representation given by \eqref{As-Rep} gives several useful insights into how estimation of the nuisance parameters affect the asymptotic distribution of $\widehat \psi_{\text{\tiny aug}}(a)$. By the central limit theorem, the first term on the right hand side of \eqref{As-Rep} is asymptotically normal. Hence, the asymptotic distribution of the estimator relies on the behavior of the term given in equation \eqref{As-Rep-2}. If both the pair of nuisance parameters $\widehat p(X)$ and $\widehat g_{a}(X)$ and the pair of nuisance parameters $\widehat e_a(X)$ and $\widehat g_{a}(X)$ converge combined at a rate fast enough such that term \eqref{As-Rep-2} is $o_P(1)$, then the estimator is $\sqrt{n}$-consistent, asymptotically normal, and has asymptotic variance equal to the variance of the efficient influence function. 

The asymptotic representation gives the rate of convergence result
\begin{align*}
\big|\big|\widehat \psi_{\text{\tiny aug}}(a)  - \psi(a)\big|\big|  &= O_P\bigg( \frac{1}{\sqrt{n}} + \Big(\big|\big|\Pr[R = 1|X] - \widehat p(X) \big|\big|_2 + \big|\big| \Pr[A=a|X, R=1] - \widehat e_a(X)\big|\big|_2 \Big)  \nonumber \\
&\quad \quad \quad \quad \quad \quad \quad \quad \times \big|\big| \widehat g_{a}(X) -  \E\big[Y|X, R=1, A=a] \big|\big|_2 \bigg)
\end{align*}
Thus, if both the combined rate of convergence of $\widehat p(X)$ and $\widehat g_{a}(X)$ and the combined rate of convergence of $\widehat e_a(X)$ and $\widehat g_{a}(X)$ are at least $\sqrt{n}$, then $\psi_{\text{\tiny aug}}(a)$ is $\sqrt{n}$ convergent. 

\subsection{Inference}

To construct Wald-style confidence intervals for $\psi(a)$, when using parametric models, we can easily obtain the sandwich estimator \citep{stefanski2002} of the sampling variance for $\widehat\psi(a)$. Alternatively, we can use the non-parametric bootstrap. If the remainder term in equation \eqref{As-Rep} of Theorem \ref{thm:estimation_g_SA} is $o_P(1)$, then the estimator $\widehat \psi_{\text{\tiny aug}}(a)$ is asymptotically normally distributed. The asymptotic normality combined with Assumption (iii) ensure that the confidence intervals calculated using  the non-parametric bootstrap asymptotically have the correct coverage rate.

\section{Simulation study}

We performed simulation studies to evaluate the finite sample performance of the augmented estimators described in the previous section and compare them with alternative approaches. 

\subsection{Data generation}

The data generation process involved six steps: generation of covariates, selection for trial participation, sampling of individuals from the target population, allocation of trial participants to specific trials, random treatment assignment, and potential/observed outcomes. Simulations contained either $n=10,000$ or $n=100,000$ individuals, including both trial participants and the sample of the target population.

\vspace{0.1in}
\noindent
\emph{1. Covariates:} Three covariates $X_1$, $X_2$, $X_3$ for each individual in the target population were drawn from a mean-zero multivariate normal distribution with all marginal variances equal to 1 and all pairwise correlations equal to 0.5.

\vspace{0.1in}
\noindent
\emph{2. Selection for trial participation:} We considered three trials, $\mathcal S = \{1,2,3\}$, a reasonable number given the requirement that all trials have examined the same treatments. We examined scenarios where the total number of trial participants,  $\sum_{s=1}^{3}n_s$, was 1,000, 2,000, or 5,000. We ``selected'' observations for participation in any trial using a logistic-linear model, $R \sim \text{Bernoulli} ( \Pr[R = 1| X ] )$ with $\Pr[R = 1| X ] = \dfrac{\text{exp}(\beta X^T) }{1+ \text{exp}(\beta X^T)}$, $X = (1, X_{1}, \ldots, X_{3})$, $\beta = (\beta_0, \ln(2), \ln(2),\ln(2) )$, where we solved for $\beta_0$ to result (on average) in the desired total number of trial participants given the total cohort sample size (exact numerical values for $\beta_0$ are available for all scenarios in the code to reproduce the simulations in Web Appendix H).

\vspace{0.1in}
\noindent
\emph{3. Sampling of individuals from the target population:} We used baseline covariate data from all remaining non-randomized individuals in the sample, $n_0 = n - \sum_{s=1}^{3}n_s$; this corresponds to taking a census of the non-randomized individuals in the simulated cohort and treating them as the sample from the target population.

\vspace{0.1in}
\noindent
\emph{4. Allocation of trial participants to specific trials:} We allocated trial participants ($R=1$) to one of the three randomized trials in $\mathcal S$ using a multinomial logistic model,  $S | (X,  R = 1)  \sim \text{Multinomial}\left(p_1, p_2, p_3; \sum_{s=1}^{3}n_s \right)$, with $p_1 = \Pr[S= 1 | X, R = 1 ] = 1 - p_2 - p_3$, $p_2 = \Pr[S= 2 | X, R = 1 ] = \dfrac{e^{\xi X^T} }{1+ e^{\xi X^T} + e^{\zeta X^T} }$, and $p_3 =\Pr[S= 3| X, R = 1 ] = \dfrac{e^{\zeta X^T} }{1+ e^{\xi X^T} + e^{\zeta X^T} }$, where $\xi = (\xi_0, \ln(1.5), \ln(1.5), \ln(1.5))$ and $\zeta = (\zeta_0, \ln(0.75),\ln(0.75),\ln(0.75))$. We evaluated a scenario in which the trials had the same sample size and one where the sample size varied across trials. We used Monte Carlo methods to obtain intercepts $\xi_0$ and $\zeta_0$ that resulted in approximately equal-sized trials or in unequal-sized trials with a 4:2:1 ratio of samples sizes (exact numerical values for $\xi_0$ and $\zeta_0$ are available for all scenarios in the code to reproduce the simulations) \citep{robertson2021intercept}.

\vspace{0.1in}
\noindent
\emph{5. Random treatment assignment:}
We generated an indicator of unconditionally randomized treatment assignment, $A$, among randomized individuals. In one scenario the treatment assignment mechanism was marginally randomized and constant across trials, $A \sim \text{Bernoulli}(\Pr [A = 1 |S = s])$ with $ \Pr [A = 1 |S = s] = 1/2$. In the second scenario, the treatment assignment mechanism varied across trials and was marginally randomized, with probabilities $ \Pr [A = 1 |S = 1] = 1/2$, $ \Pr [A = 1 |S = 2] = 1/3$, and $ \Pr [A = 1 |S = 3] = 2/3$.

\vspace{0.1in}
\noindent
\emph{6. Outcomes:} We generated potential outcomes as $Y^a =   \theta^a X^T + \epsilon^a, \mbox{ for } a \in \{0,1\}$, $\theta^{0} = (1.5,1,1,1)$, and $\theta^{1} = (0.5, -1, -1, -1) $. In all simulations, $\epsilon^a$ had an independent standard normal distribution for $a=0,1$. We generated observed outcomes under consistency, such that $ Y = A Y^1 + (1 - A) Y^0.$

\subsection{Methods implemented and comparisons}\label{subsec:estimators_used_in_sim}

\emph{Estimators:} In each simulated dataset, we applied the estimators from Section \ref{sec:estimation}. Heuristically, the identification results in Theorems \ref{thm:identification_collection} and \ref{thm:identification_collection_weakerpos} suggest two non-augmented (and non-doubly robust) estimators: the g-formula estimator, 
\begin{equation*}
  \widehat \psi_{\text{\tiny g}}(a) = \Bigg\{\sum\limits_{i=1}^n I(R_i = 0) \Bigg\}^{-1} \sum\limits_{i=1}^{n}  I(R_i = 0) \widehat g_a(X_i),
\end{equation*}
and the weighting estimator
\begin{equation*}
  \widehat \psi_{\text{\tiny w}}(a) = \Bigg\{\sum\limits_{i=1}^n I(R_i = 0) \Bigg\}^{-1} \sum\limits_{i=1}^{n} I(R_i = 1, A_i = a) \dfrac{1 - \widehat p(X_i)}{\widehat p(X_i) \widehat e_a(X_i)}  Y_i  ,
\end{equation*}
where $\widehat g_{a}(X)$, $\widehat p (X)$, and $ \widehat e_{a} (X)$ were previously defined. The estimators $\widehat \psi_{\text{\tiny g}}(a)$ and $\widehat \psi_{\text{\tiny w}}(a)$ can be viewed as special cases of the augmented estimator $\widehat \psi_{\text{\tiny aug}}(a)$: we obtain $\widehat \psi_{\text{\tiny g}}(a)$ by identically setting $\widehat p(X)$ to one, and we obtain $\widehat \psi_{\text{\tiny w}}(a)$ by identically setting $\widehat g_{a}(X)$ to zero, in equation \eqref{eq:estimator}. It follows that $\widehat \psi_{\text{\tiny g}}(a)$ and $\widehat \psi_{\text{\tiny w}}(a)$ are not robust to misspecification of the models used to estimate $\widehat g_a(X)$, or $\widehat p(X) \times \widehat e_a(X)$, respectively \citep{dahabreh2020toward}.

\vspace{0.1in}
\noindent
\emph{Model specification:} All working models included main effects for the observable covariates $X_j$, $j=1,2,3$. Models for the probability of participation in the trial and models for the probability of treatment included main effects for all covariates. We used logistic regression to model the probability of participating in any trial ($R = 1$ vs 0). In our simulation scenarios, when the probability of treatment (the randomization ratio) varies across trials, a logistic regression model of treatment with main effects of the covariates (and no adjustment for trial $S$) is not correctly specified. To avoid misspecification of the model for the probability of treatment, we used the fact that $\Pr[A = a | X, R = 1] = \sum_{s=1}^m \Pr[A = a | X, S = s] \Pr[S = s | X, R = 1]$, and modeled the probability of treatment in each trial $\Pr[A = a | X, S = s]$ using logistic regression (separately in each trial) and the probability of participation in a particular trial among the collection of trials $\Pr[S = s | X, R = 1]$ using a multinomial logistic regression model (in the collection of trials). We refer to this approach as ``averaging the trial-specific treatment probabilities.'' For comparison, in the Appendix, we report results using a (misspecified) logistic regression model with main effects only. In our simulation setup, when the randomization ratio varies across trials, use of this model is expected to have some bias when using $ \widehat \psi_{\text{\tiny w}}(a)$, should not affect $\widehat \psi_{\text{\tiny g}}(a)$ at all, and should only affect the precision, but not induce bias for $\widehat \psi_{\text{\tiny aug}}(a)$, when the outcome model is correctly specified. Outcome models were fit separately in each treatment group using linear regression to allow effect modification by all covariates.   

\vspace{0.1in}
\noindent
\emph{Comparisons:} We compared the bias and variance of estimators over 10,000 runs for each scenario and each treatment. 

\subsection{Simulation results}

Tables \ref{tab:sim_res_selected_bias} and \ref{tab:sim_res_selected_variance} present selected results from our simulation study for scenarios with $n = 10,000$ and total trial sample size, $\sum_{s=1}^{3}n_s$, of 2000 individuals, when averaging the trial-specific treatment probabilities; complete simulation results are presented in Web Appendix F, including results for modeling the probability of treatment using logistic regression across all trials. Bias estimates for $\widehat \psi_{\text{\tiny aug}}(a)$ and $\widehat \psi_{\text{\tiny g}}(a)$ were near-zero for all estimators, regardless of how the probability of treatment was modeled; $\widehat \psi_{\text{\tiny w}}(a)$ also had near-zero bias except when the randomization ratio varied across trials and the probability of treatment was modeled using logistic regression fit across all trials (see Appendix Table S3). The low bias of the g-formula estimator $\widehat \psi_{\text{\tiny g}}(a)$ and the weighting estimator $\widehat \psi_{\text{\tiny w}}(a)$ (when the model for the probability of treatment was correctly specified) indirectly verify the double robustness property of the augmented estimator $\widehat \psi_{\text{\tiny aug}}(a)$; this is also supported by the near-zero bias of $\widehat \psi_{\text{\tiny aug}}(a)$ even when the randomization ratio varied across trials and the model for the probability of treatment was misspecified (Appendix Table S3). Across all scenarios, the sampling variance of the augmented estimator $\widehat \psi_{\text{\tiny aug}}(a)$ was larger than that of the outcome model-based estimator $\widehat \psi_{\text{\tiny g}}(a)$, but substantially smaller than that of the weighting estimator $\widehat \psi_{\text{\tiny w}}(a)$.

\begin{table}
\centering
\caption {Bias estimates based on 10,000 simulation runs; selected sample size scenarios; the probability of treatment in the collection of trials was estimated by averaging the trial-specific treatment probabilities.}\label{tab:sim_res_selected_bias} 
\begin{tabular}{@{}lccccccc@{}} 
\toprule
$a$ & $n$ & $\sum_{s=1}^{3}n_s$ & Balanced & TxAM varies & $\widehat \psi_{\text{\tiny aug}}(a)$ & $\widehat \psi_{\text{\tiny g}}(a)$ & $\widehat \psi_{\text{\tiny w}}(a)$ \\ \midrule
1 & 10000       & 2000        & Yes      & No           & 0.0022           & 0.0003   & -0.0011         \\
1 & 10000       & 2000        & Yes      & Yes          & 0.0004           & -0.0003  & -0.0026         \\
1 & 10000       & 2000        & No       & No           & -0.0006          & -0.0001  & -0.0082         \\
1 & 10000       & 2000        & No       & Yes          & 0.0011           & 0.0002   & -0.0067         \\
1 & 100000      & 2000        & Yes      & No           & -0.0003          & 0.0002   & -0.0104         \\
1 & 100000      & 2000        & Yes      & Yes          & -0.0002          & 0.0008   & 0.0102          \\
1 & 100000      & 2000        & No       & No           & -0.0012          & -0.0002  & -0.0074         \\
1 & 100000      & 2000        & No       & Yes          & 0.0009           & 0.0013   & 0.0039          \\ \midrule
0 & 10000       & 2000        & Yes      & No           & 0.0001           & 0.0008   & 0.0083          \\
0 & 10000       & 2000        & Yes      & Yes          & 0.0006           & 0.0004   & 0.0060          \\
0 & 10000       & 2000        & No       & No           & 0.0010           & -0.0003  & 0.0002          \\
0 & 10000       & 2000        & No       & Yes          & 0.0002           & 0.0005   & 0.0106          \\
0 & 100000      & 2000        & Yes      & No           & 0.0013           & 0.0005   & 0.0016          \\
0 & 100000      & 2000        & Yes      & Yes          & -0.0009          & -0.0000  & 0.0034          \\
0 & 100000      & 2000        & No       & No           & -0.0009          & -0.0003  & 0.0064          \\
0 & 100000      & 2000        & No       & Yes          & 0.0012           & 0.0006   & 0.0109          \\ \bottomrule
\end{tabular}
\vspace{0.5in}
\caption*{In the column titled \emph{Balanced}, \emph{Yes} denotes scenarios in which the trials had on average equal sample sizes; \emph{No}  denotes scenarios with unequal trial sample sizes. In the column titled  \emph{TxAM varies}, \emph{Yes} denotes scenarios in which the treatment assignment mechanism varied across trials; \emph{No} denotes scenarios in which the mechanism did not vary.}
\end{table}

\begin{table}
\centering
\caption {Variance estimates based on 10,000 simulation runs; the probability of treatment in the collection of trials was estimated by averaging the trial-specific treatment probabilities.}\label{tab:sim_res_selected_variance}
\begin{tabular}{@{}lccccccc@{}} 
\toprule
$a$ & $n$ & $\sum_{s=1}^{3}n_s$ & Balanced & TxAM varies & $\widehat \psi_{\text{\tiny aug}}(a)$ & $\widehat \psi_{\text{\tiny g}}(a)$ & $\widehat \psi_{\text{\tiny w}}(a)$ \\ \midrule
1 & 10000       & 2000        & Yes      & No           & 0.0105          & 0.0038  & 0.2893         \\
1 & 10000       & 2000        & Yes      & Yes          & 0.0084          & 0.0035  & 0.1945         \\
1 & 10000       & 2000        & No       & No           & 0.0101          & 0.0038  & 0.2167         \\
1 & 10000       & 2000        & No       & Yes          & 0.0088          & 0.0038  & 0.1925         \\
1 & 100000      & 2000        & Yes      & No           & 0.0144          & 0.0039  & 0.3316         \\
1 & 100000      & 2000        & Yes      & Yes          & 0.0120          & 0.0035  & 0.3610         \\
1 & 100000      & 2000        & No       & No           & 0.0143          & 0.0039  & 0.4028         \\
1 & 100000      & 2000        & No       & Yes          & 0.0128          & 0.0037  & 0.2925         \\ \midrule
0 & 10000       & 2000        & Yes      & No           & 0.0103          & 0.0039  & 0.1165         \\
0 & 10000       & 2000        & Yes      & Yes          & 0.0131          & 0.0042  & 0.1509         \\
0 & 10000       & 2000        & No       & No           & 0.0101          & 0.0039  & 0.1136         \\
0 & 10000       & 2000        & No       & Yes          & 0.0118          & 0.0040  & 0.1551         \\
0 & 100000      & 2000        & Yes      & No           & 0.0151          & 0.0039  & 0.1873         \\
0 & 100000      & 2000        & Yes      & Yes          & 0.0198          & 0.0043  & 0.2699         \\
0 & 100000      & 2000        & No       & No           & 0.0146          & 0.0038  & 0.1663         \\
0 & 100000      & 2000        & No       & Yes          & 0.0218          & 0.0040  & 0.3243         \\ \bottomrule
\end{tabular}
\vspace{0.5in}
\caption*{In the column titled \emph{Balanced}, \emph{Yes} denotes scenarios in which the trials had on average equal sample sizes; \emph{No}  denotes scenarios with unequal trial sample sizes. In the column titled \emph{TxAM varies}, \emph{Yes} denotes scenarios in which the treatment assignment mechanism varied across trials; \emph{No} denotes scenarios in which the mechanism did not vary.}
\end{table}

\section{Application of the methods to the HALT-C trial}

\subsection{Using the HALT-C trial to emulate meta-analyses}

\emph{The HALT-C trial data:} The HALT-C trial \citep{di2008prolonged} enrolled 1050 patients with chronic hepatitis C and advanced fibrosis who had not responded to previous therapy and randomized them to treatment with peginterferon alfa-2a ($a=1$) versus no treatment ($a=0$). Patients were enrolled at 10 research centers and followed up every 3 months after randomization. Here, we used the secondary outcome of platelet count at 9 months of follow-up as the outcome of interest; we report all outcome measurements as platelets$\times 10^3/$ml. We used the following baseline covariates: baseline platelet count, age, sex, previous use of pegylated interferon, race, white blood cell count, history of injected recreational drugs, ever received a transfusion, body mass index, creatine levels, smoking status,  previous use of combination therapy (interferon and ribavirin), diabetes, serum ferritin, hemoglobin, aspartate aminotransferase levels, ultrasound evidence of splenomegaly, and ever drinking alcohol. For simplicity, we restricted our analyses to patients with complete baseline covariate and outcome data ($n = 948$). 

\emph{Using the HALT-C trial to emulate a meta-analysis and evaluate the proposed methods:} We treated observations from the largest center with complete data in the HALT-C trial, with sample size $n_0 = 199$, as a sample from the target population, $S = R = 0$. We then used the data from the remaining 9 centers as a collection of ``trials,'' $\mathcal S = \{1,\ldots,9\}$, with a total sample size of $ \sum_{j=1}^9 n_j = 749$. Appendix Table S5 summarizes covariate information, stratified by $R$; Appendix Tables S6 and S7 summarize covariate information stratified by $S$. Because HALT-C used 1:1 randomized allocation, all trials had the same treatment assignment mechanism. Our goal was to transport causal inferences from the nine trials to the population represented by the target center. We were able to use the randomization in $R = 0$ to empirically benchmark the methods we propose using as a reference standard the unadjusted estimate of the average treatment effect in the target population. In view of how we configured the data for this illustrative analysis, our results should not be clinically interpreted. 

\emph{Methods implemented and comparisons:}\label{subsec:estimators_used_in_applied}
We implemented the augmented estimator $\widehat \psi_{\text{\tiny aug}}(a)$, along with the g-formula estimator $\widehat \psi_{\text{\tiny g}}(a)$ and the weighting estimator $\widehat \psi_{\text{\tiny w}}(a)$, defined in Section \ref{subsec:estimators_used_in_sim}. These estimators use covariate, treatment, and outcome data from the collection of trials $\mathcal S$ but only covariate data from the sample of the target population $R=0$. In these analyses we specified linear regression models for the conditional expectation of the outcome in each treatment group, and logistic regression models for the probability of trial participation (in any trial) and the probability of treatment among randomized individuals. All models included main effects  for the baseline covariates listed above. For comparison, we also used an unadjusted linear regression analysis to estimate the potential outcome means and average treatment effects in $R = 0$ only; this analysis is justified by randomization and uses treatment and outcome data from the target population.

To provide an assessment of the observed data independence condition in equation \eqref{eq:observed_data_implications2}, we used analysis of covariance (ANCOVA) to compare a linear regression model that included the main effects of baseline covariates, treatment, and all possible product terms between baseline covariates and treatment, against a linear regression model that additionally included the main effect of the trial indicators and all product terms between baseline covariates, treatment, and the trial indicators.

\subsection{Results in the emulated meta-analysis}

Table \ref{tab:HALT_C_results} summarizes results from the emulation of meta-analyses using HALT-C trial data. All transportability estimators, which only use baseline covariate data from $R = 0$ and covariate, treatment, and outcome data from $R = 1$, produced estimates that were very similar to the benchmark estimates from the unadjusted estimator that uses treatment and outcome data from the target center $R=0$. The three transportability estimators produced similar point estimates, which suggests that the transported inferences are not driven by model specification choices. The three transportability estimators also produced estimates similar to those of the benchmark unadjusted estimator in $R = 0$, which suggests (but does not prove) that the identifiability conditions needed for the different transportability estimators hold, at least approximately, and that model specification is approximately correct. The ANCOVA p-value of 0.425 did not indicate gross violations of the observed data implication in display \eqref{eq:observed_data_implications2}.

\begin{table}
\caption{Results from analyses using the HALT-C trial data.}\label{tab:HALT_C_results}
\centering
\scalebox{0.85}{
\begin{tabular}{@{}lcccc@{}}
\toprule
\textbf{Estimator}                  & $a = 1$         & $a = 0$         & \textbf{Mean difference}   \\ \midrule
$R=0$ (benchmark)    & 121.6 (111.0, 133.3) & 164.4 (151.2, 178.1) &  -42.8 (-60.2, -25.7)  \\ \midrule 
$\widehat \psi_{\text{\tiny aug}}(a)$ & 124.5 (116.3, 133.5) &	167.1 (157.3, 177.7) &	-42.6 (-51.2, -34.4)   \\
$\widehat \psi_{\text{\tiny g}}(a)$   & 124.6 (116.6, 133.0) &	167.1 (157.3, 177.3) &	-42.5 (-50.5, -34.9)   \\
$\widehat \psi_{\text{\tiny w}}(a)$  &  123.2 (113.6, 136.0) &	165.9 (152.6, 182.9) &	-42.7 (-60.1, -26.2) &  \\ \bottomrule
\end{tabular}
}
\vspace{0.5in}
\caption*{$R=0$ denotes analyses using data only from the sample of the target population. Numbers in parentheses are 95\% quantile-based confidence intervals from 10,000 bootstrap samples; 95\% normal confidence intervals from the sandwich estimator were similar. }
\end{table}

\section{Discussion}

We have described methods to transport causal inferences from a collection of randomized trials to a target population in which baseline covariate data are available but no experimentation has been conducted. Our results provide a solution to one of the leading problems in the field of ``evidence synthesis'': how to use a body of evidence consisting of multiple randomized trials to draw causal inferences about the effects of the interventions investigated in the trials for a new target population of substantive interest.

Traditional approaches to evidence synthesis, including meta-analysis, focus on modeling aspects of the distribution of study effects or producing statistically optimal summaries of available data \citep{higgins2009re, rice2018re}. In general, such analyses do not produce causally interpretable estimates of the effect of well-defined interventions on any target population because the contributions of individual studies to the meta-analysis summary are weighted by the precision of the study-specific estimates, without considering the relevance of the studies to any target population or the strong assumptions needed to combine information across studies to estimate treatment effects.

In contrast, our approach explicitly targets a well-defined population chosen on scientific or policy grounds, which may be different from the populations underlying the trials. In our experience, policy-makers who use evidence syntheses to inform their decisions are not interested in the populations represented by the completed trials and almost always have a different target population in mind. Thus, the statistical methods we propose can be viewed as a form of causally interpretable meta-analysis, consistent with the conceptual framework outlined in \citep{sobel2017causal}. An important aspect of our approach is the specification of the target population and the use of sample data from that population to estimate treatment effects. In practical applications, obtaining data from a target population can be challenging \citep{barker2021causally}, but we expect that challenge will become easier to address with increasing availability of administrative and registry data from populations of clinical or policy relevance. Another important aspect of our approach is the explicit statement of the assumptions needed for the causal interpretation of estimates produced by the estimators we propose. Though the assumptions are fairly strong even for the simple case we study here (point treatments with complete adherence and outcomes assessed at the end of the study), our approach naturally connect with the broad literature on causal inference and thus extensions to address time varying treatments, incomplete adherence, and longitudinal or failure-time outcomes are possible. We believe that making assumptions explicit allows reasoning about them and can be the basis of future work on sensitivity analyses. Future work may also address differential covariate measurement error in the randomized trials and the sample of the target population, systematically missing data (e.g., when some covariates are not collected in some trials), and variable selection methods (e.g., methods incorporating the independence in display \eqref{eq:independencies4}).

Our methods can are a generalization of methods for extending inferences from a single trial to a target population (e.g., \citet{cole2010, dahabreh2018generalizing, dahabreh2020transportingStatMed, rudolph2017}). They also relate to the growing literature on matching-adjusted indirect comparison (e.g., \citet{signorovitch2010comparative, signorovitch2012matching, cheng2020statistical, jackson2021alternative}). This literature has focused on the common situation where data on the target population are only available in summary form (e.g., estimates of the first moments of the covariates in $X$), typically does not use explicit causal (counterfactual) notation (with some recent exceptions, e.g., \citet{cheng2020statistical}), relies heavily on parametric assumptions (or other investigator-imposed constraints), and only applies to a single source trial (not multiple trials in meta-analysis) \citep{jackson2021alternative}. Our identification results suggest obvious extensions of matching-adjusted indirect comparison methods to the multiple trial case, provided investigators are willing to make the necessary causal assumptions. Conversely, methods from the literature on matching-adjusted indirect comparison and related matching methods \citep{jackson2021alternative} can be used to relax the requirement of individual participant data from the target population inherent in our approach (at the cost of reliance on additional parametric assumptions or reduced efficiency). It may also be interesting to combine our approach with recent extensions of meta-regression methods to incorporate matching-adjusted indirect comparisons \citep{phillippo2020multilevel}. 

We caution that our methods will be most useful when the available trials are quite similar (e.g., in terms of treatments, outcomes, and follow-up protocols), as may be the case in prospective research programs investigating a particular intervention or close replication efforts. It should come as no surprise that a strong degree of similarity is needed in order for a ``pooled'' analysis to produce estimates with a clear causal interpretation with respect to a particular target population; after all, our approach focuses on differences in participant characteristics across the populations underlying the trials and the target population, not variations in treatment. Extensions of our approach to address variations in the randomized treatments, as well as non-randomized co-interventions, would be valuable. We suggest, however, that summarizing bodies of evidence much larger and diverse than a few closely related studies, particularly when individual participant data are only available from a minority of sources, is best undertaken with a descriptive or predictive attitude, focusing on modeling the response surface of study effects \citep{rubin1992meta}, rather than on obtaining a single causally interpretable estimate. That said, we sincerely hope that future work will find ways to improve on this rather unambitious view.

\backmatter


\section*{Acknowledgments}

This work was supported in part by Patient-Centered Outcomes Research Institute (PCORI) awards ME-1306-03758, ME-1502-27794, and ME-2019C3-17875; National Institutes of Health (NIH) grant R37 AI102634; and Agency for Healthcare Research and Quality (AHRQ) National Research Service Award T32AGHS00001. The content is solely the responsibility of the authors and does not necessarily represent the official views of PCORI, its Board of Governors, the PCORI Methodology Committee, NIH, or AHRQ. 

The data analyses in our paper used HALT-C data obtained from the National Institute of Diabetes and Digestive and Kidney Diseases (NIDDK) Central Repositories \\
(\url{https://repository.niddk.nih.gov/studies/halt-c/}; last accessed: January 18, 2021). This paper does not necessarily reflect the opinions or views of the HALT-C investigators, the NIDDK Central Repositories, or the NIDDK.\vspace*{-8pt}


\section*{Supplementary Materials}

Web Appendices A through G are available with this paper at the Biometrics website on Wiley Online Library. A copy of the Web Appendices, current as of February 4, 2022, is also available here: \href{https://www.dropbox.com/s/mtovcdyopqplhqi/BIOM2021038M.R1%20APPENDIX_causally_interpretable_meta_analysis__transportability_with_multiple_trials.pdf?dl=0}{web link}.

Code to reproduce our simulations and the HALT-C analyses, as well as an artificial data-set to illustrate the application of the methods are provided on GitHub: \\
\href{https://github.com/serobertson/EfficientCausalMetaAnalysis}{https://github.com/serobertson/EfficientCausalMetaAnalysis}.

\vspace*{-8pt}



\bibliographystyle{biom} 
\bibliography{biblio}

\begin{thebibliography}{}

\bibitem[\protect\citeauthoryear{Barker, Dahabreh, Steingrimsson, Houck,
  Donenberg, DiClemente, and Brown}{Barker et~al.}{2021}]{barker2021causally}
Barker, D.~H., Dahabreh, I.~J., Steingrimsson, J.~A., Houck, C., Donenberg, G.,
  DiClemente, R., and Brown, L.~K. (2021).
\newblock Causally interpretable meta-analysis: Application in adolescent hiv
  prevention.
\newblock {\em Prevention Science} pages 1--12.

\bibitem[\protect\citeauthoryear{Bickel, Klaassen, Wellner, and Ritov}{Bickel
  et~al.}{1993}]{bickel1993efficient}
Bickel, P.~J., Klaassen, C.~A., Wellner, J.~A., and Ritov, Y. (1993).
\newblock {\em Efficient and adaptive estimation for semiparametric models}.
\newblock Johns Hopkins University Press Baltimore.

\bibitem[\protect\citeauthoryear{Breslow, Robins, Wellner,
  et~al\mbox{.}}{Breslow et~al.}{2000}]{breslow2000semi}
Breslow, N.~E., Robins, J.~M., Wellner, J.~A., et~al. (2000).
\newblock On the semi-parametric efficiency of logistic regression under
  case-control sampling.
\newblock {\em Bernoulli} {\bf 6,} 447--455.

\bibitem[\protect\citeauthoryear{Breslow and Wellner}{Breslow and
  Wellner}{2007}]{breslow2007weighted}
Breslow, N.~E. and Wellner, J.~A. (2007).
\newblock Weighted likelihood for semiparametric models and two-phase
  stratified samples, with application to {C}ox regression.
\newblock {\em Scandinavian Journal of Statistics} {\bf 34,} 86--102.

\bibitem[\protect\citeauthoryear{Cheng, Ayyagari, and Signorovitch}{Cheng
  et~al.}{2020}]{cheng2020statistical}
Cheng, D., Ayyagari, R., and Signorovitch, J. (2020).
\newblock The statistical performance of matching-adjusted indirect
  comparisons: Estimating treatment effects with aggregate external control
  data.
\newblock {\em The Annals of Applied Statistics} {\bf 14,} 1806--1833.

\bibitem[\protect\citeauthoryear{Cole and Stuart}{Cole and
  Stuart}{2010}]{cole2010}
Cole, S.~R. and Stuart, E.~A. (2010).
\newblock Generalizing evidence from randomized clinical trials to target
  populations: the {A}{C}{T}{G} 320 trial.
\newblock {\em American {J}ournal of {E}pidemiology} {\bf 172,} 107--115.

\bibitem[\protect\citeauthoryear{Cooper, Hedges, and Valentine}{Cooper
  et~al.}{2009}]{cooper2009handbook}
Cooper, H., Hedges, L.~V., and Valentine, J.~C. (2009).
\newblock {\em The handbook of research synthesis and meta-analysis}.
\newblock Russell Sage Foundation.

\bibitem[\protect\citeauthoryear{Dahabreh, Petito, Robertson, Hern{\'a}n, and
  Steingrimsson}{Dahabreh et~al.}{2020}]{dahabreh2020toward}
Dahabreh, I., Petito, L., Robertson, S., Hern{\'a}n, M., and Steingrimsson, J.
  (2020).
\newblock Toward causally interpretable meta-analysis: Transporting inferences
  from multiple randomized trials to a new target population.
\newblock {\em Epidemiology (Cambridge, Mass.)} .

\bibitem[\protect\citeauthoryear{Dahabreh and Hern{\'a}n}{Dahabreh and
  Hern{\'a}n}{2019}]{dahabreh2019commentaryonweiss}
Dahabreh, I.~J. and Hern{\'a}n, M.~A. (2019).
\newblock Extending inferences from a randomized trial to a target population.
\newblock {\em European Journal of Epidemiology} {\bf 34,} 719--722.

\bibitem[\protect\citeauthoryear{Dahabreh, Robertson, and Hern{\'a}n}{Dahabreh
  et~al.}{2019}]{dahabreh2019relation}
Dahabreh, I.~J., Robertson, S.~E., and Hern{\'a}n, M.~A. (2019).
\newblock On the relation between g-formula and inverse probability weighting
  estimators for generalizing trial results.
\newblock {\em Epidemiology (Cambridge, Mass.)} {\bf 30,} 807--812.

\bibitem[\protect\citeauthoryear{Dahabreh, Robertson, Steingrimsson, Stuart,
  and Hern{\'a}n}{Dahabreh et~al.}{2020}]{dahabreh2020transportingStatMed}
Dahabreh, I.~J., Robertson, S.~E., Steingrimsson, J.~A., Stuart, E.~A., and
  Hern{\'a}n, M.~A. (2020).
\newblock Extending inferences from a randomized trial to a new target
  population.
\newblock {\em Statistics in Medicine} {\bf 39,} 1999--2014.

\bibitem[\protect\citeauthoryear{Dahabreh, Robertson, Tchetgen~Tchetgen,
  Stuart, and Hern{\'a}n}{Dahabreh et~al.}{2018}]{dahabreh2018generalizing}
Dahabreh, I.~J., Robertson, S.~E., Tchetgen~Tchetgen, E.~J., Stuart, E.~A., and
  Hern{\'a}n, M.~A. (2018).
\newblock Generalizing causal inferences from individuals in randomized trials
  to all trial-eligible individuals.
\newblock {\em Biometrics} {\bf 75,} 685--694.

\bibitem[\protect\citeauthoryear{Dahabreh, Robins, Haneuse, and
  Hern\'an}{Dahabreh et~al.}{2019}]{dahabreh2019identification}
Dahabreh, I.~J., Robins, J.~M., Haneuse, S. J.-P., and Hern\'an, M.~A. (2019).
\newblock Generalizing causal inferences from randomized trials: counterfactual
  and graphical identification.
\newblock {\em arXiv preprint arXiv:1906.10792} .

\bibitem[\protect\citeauthoryear{Dawid}{Dawid}{1979}]{dawid1979conditional}
Dawid, A.~P. (1979).
\newblock Conditional independence in statistical theory.
\newblock {\em Journal of the Royal Statistical Society: Series B
  (Methodological)} {\bf 41,} 1--15.

\bibitem[\protect\citeauthoryear{Di~Bisceglie, Shiffman, Everson, Lindsay,
  Everhart, Wright, Lee, Lok, Bonkovsky, Morgan, et~al\mbox{.}}{Di~Bisceglie
  et~al.}{2008}]{di2008prolonged}
Di~Bisceglie, A.~M., Shiffman, M.~L., Everson, G.~T., Lindsay, K.~L., Everhart,
  J.~E., Wright, E.~C., Lee, W.~M., Lok, A.~S., Bonkovsky, H.~L., Morgan,
  T.~R., et~al. (2008).
\newblock Prolonged therapy of advanced chronic hepatitis {C} with low-dose
  peginterferon.
\newblock {\em New England Journal of Medicine} {\bf 359,} 2429--2441.

\bibitem[\protect\citeauthoryear{H{\'a}jek}{H{\'a}jek}{1971}]{hajek1971comment}
H{\'a}jek, J. (1971).
\newblock Comment on ``{A}n essay on the logical foundations of survey sampling
  by {D}. {B}asu''.
\newblock In Godambe, V.~P. and Sprott, D.~A., editors, {\em Foundations of
  statistical inference}, page 236. Holt, Rinehart, and Winston, New York City,
  {N}{Y}.

\bibitem[\protect\citeauthoryear{Higgins, Thompson, and Spiegelhalter}{Higgins
  et~al.}{2009}]{higgins2009re}
Higgins, J.~P., Thompson, S.~G., and Spiegelhalter, D.~J. (2009).
\newblock A re-evaluation of random-effects meta-analysis.
\newblock {\em Journal of the Royal Statistical Society: Series A (Statistics
  in Society)} {\bf 172,} 137--159.

\bibitem[\protect\citeauthoryear{Jackson, Rhodes, and Ouwens}{Jackson
  et~al.}{2021}]{jackson2021alternative}
Jackson, D., Rhodes, K., and Ouwens, M. (2021).
\newblock Alternative weighting schemes when performing matching-adjusted
  indirect comparisons.
\newblock {\em Research Synthesis Methods} {\bf 12,} 333--346.

\bibitem[\protect\citeauthoryear{Kennedy, Sj{\"o}lander, and Small}{Kennedy
  et~al.}{2015}]{kennedy2015semiparametric}
Kennedy, E.~H., Sj{\"o}lander, A., and Small, D. (2015).
\newblock Semiparametric causal inference in matched cohort studies.
\newblock {\em Biometrika} {\bf 102,} 739--746.

\bibitem[\protect\citeauthoryear{Kosorok}{Kosorok}{2008}]{kosorok2008introduction}
Kosorok, M.~R. (2008).
\newblock {\em Introduction to empirical processes and semiparametric
  inference.}
\newblock Springer.

\bibitem[\protect\citeauthoryear{Luedtke, Carone, and van~der Laan}{Luedtke
  et~al.}{2019}]{luedtke2019omnibus}
Luedtke, A., Carone, M., and van~der Laan, M.~J. (2019).
\newblock An omnibus non-parametric test of equality in distribution for
  unknown functions.
\newblock {\em Journal of the Royal Statistical Society: Series B (Statistical
  Methodology)} {\bf 81,} 75--99.

\bibitem[\protect\citeauthoryear{Petersen, Porter, Gruber, Wang, and van~der
  Laan}{Petersen et~al.}{2012}]{petersen2012diagnosing}
Petersen, M.~L., Porter, K.~E., Gruber, S., Wang, Y., and van~der Laan, M.~J.
  (2012).
\newblock Diagnosing and responding to violations in the positivity assumption.
\newblock {\em Statistical Methods in Medical Research} {\bf 21,} 31--54.

\bibitem[\protect\citeauthoryear{Phillippo, Dias, Ades, Belger, Brnabic,
  Schacht, Saure, Kadziola, and Welton}{Phillippo
  et~al.}{2020}]{phillippo2020multilevel}
Phillippo, D.~M., Dias, S., Ades, A., Belger, M., Brnabic, A., Schacht, A.,
  Saure, D., Kadziola, Z., and Welton, N.~J. (2020).
\newblock Multilevel network meta-regression for population-adjusted treatment
  comparisons.
\newblock {\em Journal of the Royal Statistical Society: Series A (Statistics
  in Society)} {\bf 183,} 1189--1210.

\bibitem[\protect\citeauthoryear{Racine, Hart, and Li}{Racine
  et~al.}{2006}]{racine2006testing}
Racine, J.~S., Hart, J., and Li, Q. (2006).
\newblock Testing the significance of categorical predictor variables in
  nonparametric regression models.
\newblock {\em Econometric Reviews} {\bf 25,} 523--544.

\bibitem[\protect\citeauthoryear{Rice, Higgins, and Lumley}{Rice
  et~al.}{2018}]{rice2018re}
Rice, K., Higgins, J., and Lumley, T. (2018).
\newblock A re-evaluation of fixed effect (s) meta-analysis.
\newblock {\em Journal of the Royal Statistical Society: Series A (Statistics
  in Society)} {\bf 181,} 205--227.

\bibitem[\protect\citeauthoryear{Robertson, Steingrimsson, and
  Dahabreh}{Robertson et~al.}{2021}]{robertson2021intercept}
Robertson, S.~E., Steingrimsson, J.~A., and Dahabreh, I.~J. (2021).
\newblock Using numerical methods to design simulations: revisiting the
  balancing intercept.
\newblock {\em American Journal of Epidemiology} .

\bibitem[\protect\citeauthoryear{Robins, Li, Tchetgen~Tchetgen, and van~der
  Vaart}{Robins et~al.}{2008}]{robins2008higher}
Robins, J., Li, L., Tchetgen~Tchetgen, E., and van~der Vaart, A. (2008).
\newblock Higher order influence functions and minimax estimation of nonlinear
  functionals.
\newblock In {\em Probability and statistics: essays in honor of David A.
  Freedman}, pages 335--421. Institute of Mathematical Statistics.

\bibitem[\protect\citeauthoryear{Robins}{Robins}{1988}]{robins1988confidence}
Robins, J.~M. (1988).
\newblock Confidence intervals for causal parameters.
\newblock {\em Statistics in Medicine} {\bf 7,} 773--785.

\bibitem[\protect\citeauthoryear{Robins and Greenland}{Robins and
  Greenland}{2000}]{robins2000d}
Robins, J.~M. and Greenland, S. (2000).
\newblock Causal inference without counterfactuals: comment.
\newblock {\em Journal of the American Statistical Association} {\bf 95,}
  431--435.

\bibitem[\protect\citeauthoryear{Robins, Rotnitzky, and Scharfstein}{Robins
  et~al.}{2000}]{robins2000c}
Robins, J.~M., Rotnitzky, A., and Scharfstein, D.~O. (2000).
\newblock Sensitivity analysis for selection bias and unmeasured confounding in
  missing data and causal inference models.
\newblock In {\em Statistical models in epidemiology, the environment, and
  clinical trials}, pages 1--94. Springer.

\bibitem[\protect\citeauthoryear{Robins, Sued, Lei-Gomez, and Rotnitzky}{Robins
  et~al.}{2007}]{robins2007}
Robins, J.~M., Sued, M., Lei-Gomez, Q., and Rotnitzky, A. (2007).
\newblock Comment: Performance of double-robust estimators when ``inverse
  probability'' weights are highly variable.
\newblock {\em Statistical Science} {\bf 22,} 544--559.

\bibitem[\protect\citeauthoryear{Rubin}{Rubin}{1974}]{rubin1974}
Rubin, D.~B. (1974).
\newblock Estimating causal effects of treatments in randomized and
  nonrandomized studies.
\newblock {\em Journal of {E}ducational {P}sychology} {\bf 66,} 688--701.

\bibitem[\protect\citeauthoryear{Rubin}{Rubin}{1986}]{rubin1986}
Rubin, D.~B. (1986).
\newblock Statistics and causal inference: Comment: Which ifs have causal
  answers.
\newblock {\em Journal of the {A}merican {S}tatistical {A}ssociation} {\bf 81,}
  961--962.

\bibitem[\protect\citeauthoryear{Rubin}{Rubin}{1992}]{rubin1992meta}
Rubin, D.~B. (1992).
\newblock Meta-analysis: literature synthesis or effect-size surface
  estimation?
\newblock {\em Journal of Educational Statistics} {\bf 17,} 363--374.

\bibitem[\protect\citeauthoryear{Rubin}{Rubin}{2010}]{rubin2010reflections}
Rubin, D.~B. (2010).
\newblock Reflections stimulated by the comments of {S}hadish (2010) and {W}est
  and {T}hoemmes (2010).
\newblock {\em Psychological Methods} {\bf 15,} 38--46.

\bibitem[\protect\citeauthoryear{Rudolph and van~der Laan}{Rudolph and van~der
  Laan}{2017}]{rudolph2017}
Rudolph, K.~E. and van~der Laan, M.~J. (2017).
\newblock Robust estimation of encouragement design intervention effects
  transported across sites.
\newblock {\em Journal of the Royal Statistical Society. Series B (Statistical
  Methodology)} {\bf 79,} 1509--1525.

\bibitem[\protect\citeauthoryear{Saegusa and Wellner}{Saegusa and
  Wellner}{2013}]{saegusa2013weighted}
Saegusa, T. and Wellner, J.~A. (2013).
\newblock Weighted likelihood estimation under two-phase sampling.
\newblock {\em Annals of Statistics} {\bf 41,} 269--295.

\bibitem[\protect\citeauthoryear{Signorovitch, Sikirica, Erder, Xie, Lu,
  Hodgkins, Betts, and Wu}{Signorovitch
  et~al.}{2012}]{signorovitch2012matching}
Signorovitch, J.~E., Sikirica, V., Erder, M.~H., Xie, J., Lu, M., Hodgkins,
  P.~S., Betts, K.~A., and Wu, E.~Q. (2012).
\newblock Matching-adjusted indirect comparisons: a new tool for timely
  comparative effectiveness research.
\newblock {\em Value in Health} {\bf 15,} 940--947.

\bibitem[\protect\citeauthoryear{Signorovitch, Wu, Andrew, Gerrits, Kantor,
  Bao, Gupta, and Mulani}{Signorovitch
  et~al.}{2010}]{signorovitch2010comparative}
Signorovitch, J.~E., Wu, E.~Q., Andrew, P.~Y., Gerrits, C.~M., Kantor, E., Bao,
  Y., Gupta, S.~R., and Mulani, P.~M. (2010).
\newblock Comparative effectiveness without head-to-head trials.
\newblock {\em Pharmacoeconomics} {\bf 28,} 935--945.

\bibitem[\protect\citeauthoryear{Sobel, Madigan, and Wang}{Sobel
  et~al.}{2017}]{sobel2017causal}
Sobel, M., Madigan, D., and Wang, W. (2017).
\newblock Causal inference for meta-analysis and multi-level data structures,
  with application to randomized studies of {V}ioxx.
\newblock {\em Psychometrika} {\bf 82,} 459--474.

\bibitem[\protect\citeauthoryear{Stefanski and Boos}{Stefanski and
  Boos}{2002}]{stefanski2002}
Stefanski, L.~A. and Boos, D.~D. (2002).
\newblock The calculus of m-estimation.
\newblock {\em The American Statistician} {\bf 56,} 29--38.

\bibitem[\protect\citeauthoryear{Tsiatis}{Tsiatis}{2007}]{tsiatis2007}
Tsiatis, A. (2007).
\newblock {\em Semiparametric theory and missing data}.
\newblock Springer Science \& Business Media.

\bibitem[\protect\citeauthoryear{Van~der Vaart}{Van~der
  Vaart}{2000}]{van2000asymptotic}
Van~der Vaart, A.~W. (2000).
\newblock {\em Asymptotic statistics}, volume~3.
\newblock Cambridge University Press.

\bibitem[\protect\citeauthoryear{van~der Vaart and Wellner}{van~der Vaart and
  Wellner}{1996}]{van1996weak}
van~der Vaart, A.~W. and Wellner, J.~A. (1996).
\newblock {\em Weak Convergence and Empirical Processes}.
\newblock Springer.

\bibitem[\protect\citeauthoryear{VanderWeele}{VanderWeele}{2009}]{vanderWeele2009}
VanderWeele, T.~J. (2009).
\newblock Concerning the consistency assumption in causal inference.
\newblock {\em Epidemiology} {\bf 20,} 880--883.

\bibitem[\protect\citeauthoryear{Vo, Porcher, Chaimani, and Vansteelandt}{Vo
  et~al.}{2019}]{vo2019novel}
Vo, T.-T., Porcher, R., Chaimani, A., and Vansteelandt, S. (2019).
\newblock A novel approach for identifying and addressing case-mix
  heterogeneity in individual participant data meta-analysis.
\newblock {\em Research Synthesis Methods} {\bf 10,} 582--596.

\bibitem[\protect\citeauthoryear{Vo, Porcher, and Vansteelandt}{Vo
  et~al.}{2021}]{vo2021assessing}
Vo, T.-T., Porcher, R., and Vansteelandt, S. (2021).
\newblock Assessing the impact of case-mix heterogeneity in individual
  participant data meta-analysis: Novel use of \emph{{I}}$^2$ statistic and
  prediction interval.
\newblock {\em Research Methods in Medicine \& Health Sciences} {\bf 2,}
  12--30.

\bibitem[\protect\citeauthoryear{Westreich, Edwards, Lesko, Stuart, and
  Cole}{Westreich et~al.}{2017}]{westreich2017}
Westreich, D., Edwards, J.~K., Lesko, C.~R., Stuart, E., and Cole, S.~R.
  (2017).
\newblock Transportability of trial results using inverse odds of sampling
  weights.
\newblock {\em American Journal of Epidemiology} {\bf 186,} 1010--1014.

\end{thebibliography}





\label{lastpage}

\end{document}